\newcommand{\mathleft}{\@fleqntrue\@mathmargin0pt}
\newcommand{\mathcenter}{\@fleqnfalse}
\newtheorem{theorem}{Theorem}
\newtheorem{lemma}{Lemma}
\newtheorem{corollary}[lemma]{Corollary}
\newtheorem{definition}{Definition}
\newcommand{\utag}[2]{\mathop{#2}\limits^{\text{(#1)}}}
\newcommand{\uref}[1]{(#1)}
\newcommand{\Ie}{\textit{i.e., }}
\newcommand{\Eg}{\textit{e.g., }}
\long\def\symbolfootnote[#1]#2{\begingroup
\def\thefootnote{\fnsymbol{footnote}}\footnote[#1]{#2}\endgroup}
\newcommand{\cE}{\mathcal{E}}
\newcommand{\cH}{\mathcal{H}}
\newcommand{\cK}{\mathcal{K}}
\newcommand{\cO}{\mathcal{O}}
\newcommand{\cX}{\mathcal{X}}
\newcommand{\pb}[1]{p\left(#1\right)}
\newcommand{\wb}[1]{w\left(#1\right)}
\newcommand{\qb}[1]{q\left(#1\right)}
\newcommand{\bX}{\mathbf{X}}
\newcommand{\bY}{\mathbf{Y}}
\newcommand{\bx}{\mathbf{x}}
\newcommand{\by}{\mathbf{y}}
\newcommand{\revise}[1]{{\color{blue}{#1}}}
\begin{document}
\title{
Mechanisms for Hiding Sensitive Genotypes with Information-Theoretic Privacy
}

\author{%
  \IEEEauthorblockN{Fangwei Ye\IEEEauthorrefmark{1}, Hyunghoon Cho\IEEEauthorrefmark{2}, Salim El Rouayheb\IEEEauthorrefmark{1}%
  \thanks{A preliminary version of this paper was presented at IEEE International Symposium on Information Theory, Los Angeles, CA, USA, 2020. 
} 
}

  \IEEEauthorblockA{\IEEEauthorrefmark{1}%
   Rutgers University, Email: \{fangwei.ye, salim.elrouayheb\}@rutgers.edu}
  
  \and

  \IEEEauthorblockA{\IEEEauthorrefmark{2}%
Broad Institute of MIT and Harvard,                   
              Email: hhcho@broadinstitute.org} 
}

\maketitle

\begin{abstract}

Motivated by the growing availability of personal genomics services, we study an information-theoretic privacy problem that arises when sharing genomic data: a user wants to share his or her genome sequence while keeping the genotypes at certain positions hidden, which could otherwise reveal critical health-related information. A straightforward solution of erasing (masking) the chosen genotypes does not ensure privacy, because the correlation between nearby positions can leak the masked genotypes. We introduce an erasure-based privacy mechanism with perfect information-theoretic privacy, whereby the released sequence is statistically independent of the sensitive genotypes. Our mechanism can be interpreted as a locally-optimal greedy algorithm for a given processing order of sequence positions, where utility is measured by the number of positions released without erasure. We show that finding an optimal order is NP-hard in general and provide an upper bound on the optimal utility. For sequences from hidden Markov models, a standard modeling approach in genetics, we propose an efficient algorithmic implementation of our mechanism with complexity polynomial in sequence length. Moreover, we illustrate the robustness of the mechanism by bounding the privacy leakage from erroneous prior distributions. Our work is a step towards more rigorous control of privacy in genomic data sharing.

\end{abstract}

\section{Introduction}
\subsection{Motivation}
The rise of personal genomics, whereby private individuals are exposed to an increasing range of direct-to-consumer services for sequencing, sharing, or analyzing their genomes, is leading to growing concerns for genomic privacy~\cite{Hubaux17,Grishin19,Berger19}. A personal genome is a rich trove of information about the underlying individual, including predictors for disease risks and other health-related traits, which holds great potential for improving one's health, yet may cause harm if used against the individual. Unlike other types of personal data like passwords, one's genetic data cannot be replaced once leaked, and a data breach may even affect the relatives of the individual whose genome is leaked. In order to facilitate the sharing of genomes to improve public health and advance science, we need principled strategies for controlling the privacy risks associated with genomic data sharing.

A key need in this regard is to selectively limit the leakage of information about biological or health-related traits of an individual that can be inferred from the shared genetic data. For example, one may wish to hide certain \emph{genotypes} (an individual's genetic information at specific genomic positions) with well-established disease association before sharing his or her data with others (\Eg analytic service providers or researchers). Such a capability would give the individuals more fine-grained control over their genomic privacy.

A simple approach to privacy protection, whereby specific positions in the genome deemed sensitive by the individual are masked before sharing the data, does not provide sufficient privacy protection. This is because the correlation structure among nearby genomic positions induced by the biological processes of genetic inheritance can be used to reconstruct the masked data as demonstrated in a number of studies~\cite{nyholt2009,Von18}.
To prevent such an attack, one could alternatively erase all positions that are highly correlated with the sensitive sites~\cite{Gursoy19}, which may be achieved by masking the data within a large window around each sensitive position.
Unfortunately, depending upon the chosen size of window, these approaches either provide incomplete privacy protection or require an excessive amount of data to be erased in order to achieve strong privacy (as we demonstrate in our results), thus limiting the usefulness of the shared data.
Here, we aim to design a principled and effective mechanism for sharing a personal genome that provably hides sensitive positions,
while introducing a small amount of erasure.
Our techniques build upon the recent work on ON-OFF privacy~\cite{ONOFF_ISIT,ONOFF_ITW} while extending the theory to general data distributions beyond Markov chains addressed in the previous work.

It is worth noting that information-theoretic approaches are being increasingly explored for a diverse range of applications in genomics, including sequencing~\cite{Tse_2013}, genome-wide association study (GWAS)~\cite{GWS_Maddah-Ali_2018,cho2018secure}, genome assembly~\cite{Ilan,Sriiram_2017}, regulatory network of gene interactions (RNGI)~\cite{Milenkovic_coding_info},
and DNA-based information storage~\cite{Oligca_2016}. There are also recent works addressing the issue of genomic privacy, including a solution for private shotgun sequencing~\cite{Private_Maddah_2019} based on the intensively researched private information retrieval (PIR) problems~\cite{Sun_2017,Freij-Hollanti_2017,Banawan_2018,Tajeddine_2016,Li_side,Kadhe_2017} and differential privacy mechanisms for sharing aggregate genomic data~\cite{SIMMONS201654,fienberg2011privacy,cho2020privacy}. 
Broadly, our work can be viewed as a continuation of these efforts to develop effective genomic data processing tools from an information-theoretic perspective, yet for a novel problem that we introduce, \Ie the design of mechanisms for selectively hiding sensitive positions in genetic sequences.

\subsection{Genetics background}

An individual’s genome consists of a pair of sequences, one from each parent, each consisting of around 3 billion nucleotides (\textsf{A}, \textsf{C}, \textsf{G}, and \textsf{T}). Each sequence is referred to as a \emph{haplotype}. Since most of the genome sequence is identical between different individuals, a common way to compactly represent a personal genome is as a list of positions of variation, paired with the observed nucleotide(s) in the given individual (referred to as a \emph{genotype}). In this work, we consider the problem of sharing a list of genotypes corresponding to a \emph{single} haplotype of an individual. Although standard sequencing or genotyping pipelines produce a genotype at each position that convolves the two haplotypes, well-established methods exist~\cite{loh2016reference,browning2011haplotype}  for resolving this ambiguity in order to separate the two haplotypes (a process called \emph{phasing}), after which each haplotype could be individually considered.

In the setting of our work, we consider an adversary whose goal is to infer the target individual’s genotypes at specific positions in the genome, given a partially masked genetic sequence of the individual. In principle, this reconstruction task is equivalent to an extensively studied problem in bioinformatics known as \emph{genotype imputation}, originally developed for coping with the presence of missing data in the existing experimental pipelines for characterizing personal genomes. If one were to mask only the sensitive positions before sharing the data, existing imputation algorithms are expected to be effective at revealing the hidden genotypes using other genotypes in their respective neighborhoods.

A state-of-the-art algorithm for genotype imputation, Minimac~\cite{Das16},
is based on a classical model of genetic sequences introduced by Li and Stephens~\cite{Li03}.
In this model, a person’s genetic sequence is modelled as a mosaic of a large group of reference sequences from other individuals.
This model intuitively captures the underlying biological process of \emph{recombination}, which describes the interleaving of two haplotypes of each parent when their genetic material is passed onto the child.
Formally, these models are expressed as hidden Markov models (HMMs), where a sequence of genotypes of an individual is generated from a sequence of hidden states indicating which reference haplotype to copy the genotype from, for each corresponding position.
The parameters of these models are typically inferred from a large reference panel including tens of thousands of sequenced human genomes~\cite{mccarthy2016}.
Although alternative approaches to imputation (e.g. based on matrix factorization~\cite{chi2013}) exist, in our work we are especially interested in HMMs as the primary means to model the distribution of genotypes, considering the wide adoption of HMMs in genetics not only for imputation, but also for other standard tasks like phasing~\cite{browning2011haplotype} and simulation~\cite{dutheil2009ancestral}.
Further details of this model is provided in Section~\ref{sec:HMM}.

\subsection{Setup and contributions}

In this paper, we formulate the \emph{genotype hiding} problem: We consider a user who wishes to share a partially erased version of their genetic sequence while protecting a list of sensitive positions.
Privacy is measured by the mutual information between the sensitive positions and the released sequence, and we adopt a stringent privacy requirement that enforces zero mutual information (i.e., perfect privacy).
The goal of the problem is to design a privacy mechanism that satisfies this requirement, while minimizing the number of erasures introduced so as to maximize the utility of the data. 

We present such a mechanism with perfect privacy and provide a range of theoretical insights into its performance with respect to its utility, measured by the erasure rate.
The proposed mechanism sequentially processes the positions in the sequence in a given ordering and determines a suitable erasure rate at each position based on the previously released positions and the data generating distribution.
We prove that our mechanism can be viewed as a locally-optimal, greedy solution for minimizing the erasure rate at each position.
Furthermore, we give a lower bound on the number of erasures required for any mechanism satisfying the privacy constraint, and show that our privacy mechanism is in fact (globally) optimal for a class of data generative distributions defined by Markov chains.
We also show that finding the optimal ordering for the sequential mechanism is generally intractable (NP-hard), illustrating the limits of current techniques. 
Lastly, we derive an upper bound on potential privacy leakage due to inaccuracies in the estimation of the data generative model, suggesting that our mechanism is relatively robust to a small amount of noise in the data distribution.

For practical applications, we are particularly interested in data generating distributions induced by hidden Markov models (HMMs), which are broadly adopted in genetics as described in Section~\ref{sec:HMM}.
To this end, we also present a computationally-efficient algorithm to implement the proposed privacy mechanism based on HMMs, and provide an empirical evaluation of its performance on simulated datasets.

The rest of this paper is organized as follows. In Section~\ref{sec:formulation}, we formalize the genotype-hiding problem. Performance bounds are summarized in Section~\ref{sec:bound}. In Section~\ref{sec:mechanism}, we introduce our privacy mechanism for hiding sensitive genotypes.
In Section~\ref{section:greedy_optimality}, we describe its interpretation as a locally-optimal solution in detail and demonstrate the NP-hardness of finding the optimal ordering in general. The robustness of our privacy mechanism to model mismatch is discussed in Section~\ref{sec:robustness}. In Section~\ref{sec:hmm}, we propose an efficient implementation of the privacy mechanism for hidden Markov models. Simulation experiments are presented in Section~\ref{sec:simulation}. Finally in Section~\ref{sec:conclusion}, we conclude the paper and discuss future directions.

\section{The Genotype-Hiding Problem}
\label{sec:formulation}
Let $\bX=(X_1,\dots,X_n)$ be the user's personal genome sequence of length $n$, and each $X_i$ takes values in the alphabet $\cX$.
The user wishes to share $\bX$ with others, but is concerned about revealing information about certain positions of $\bX$.
To hide the values at these sensitive positions, the user generates a masked version of the data $\bY=(Y_1,\ldots,Y_n)$, which only partially reveals $\bX$.

The desired properties of $\bY$ are given as follows. First, since we expect substitution errors to be considerably more undesirable than erasures in genetic analyses, we impose a constraint that $Y_i$ can be either $X_i$ or the erasure symbol $\ast$. We refer to this property as the \emph{faithfulness condition}, i.e., 
\begin{align}
   Y_i = X_i~\text{or}~\ast.  ~~~~~  \mathbf{(Faithfulness)} 
\end{align}
Note that the alphabet of $Y_i$ is $\mathcal{X}\cup \{\ast\}$.

Next, let $\cK\subset [n] := \{1,\dots,n\}$ be the user-provided set of indices of $\bX$ containing sensitive information. We assume that $\cK$ is chosen irrespective of the sequence (\Ie independently from $\bX$) based on information such as family history or curated disease associations. 
We require that no information about $X_{\cK}=\{X_i:i \in \cK\}$ is revealed when $\bY$ is shared.
In other words, we require that
\begin{align}
\label{eq:privacy}
    I(X_{\cK};\bY) = 0, ~~~~~  \mathbf{(Privacy)}
\end{align}
\noindent where $I(\cdot)$ denotes the mutual information.
We refer to this requirement as the \emph{privacy condition}.
Note that our notion of privacy
is stronger than alternatives such as local differential privacy~\cite{Kairouz14}, which allows a small amount of leakage.
Our work focuses on maximizing the utility over all mechanisms satisfying the perfect privacy condition.

We aim to design a \emph{privacy mechanism} $\wb{\by|\bx}$ to generate $\bY$ from given $\bX$ and $\cK$ such that both the faithfulness and privacy conditions are satisfied.
Here, we consider the ideal scenario where the data generating distribution $\pb{\bx}$ is known to the mechanism.
We discuss the impact of having an inaccurate $\pb{\bx}$ in Section~\ref{sec:robustness}; even under this challenging scenario, we show that the potential privacy leakage is bounded by the divergence between the given $\pb{\bx}$ and the true distribution.  
Note that we use uppercase symbols to represent random variables and lowercase symbols to denote their realizations.

While satisfying the above two conditions, we wish to share as much of $\bX$ as possible.
More precisely, let $e(\bY)$ be the number of erasure symbols in $\bY$.
Our goal is to minimize the expected number of erasures $\mathbb{E}[e(\bY)]$, or equivalently the \emph{erasure rate} $\frac{1}{n} \mathbb{E}[e(\bY)]$, where
\begin{equation}
\mathbb{E}[e(\bY)] 
= \sum_{i=1}^n \mathbb{E}[\mathbbm{1}\{Y_i=\ast\}] 
= \sum_{i=1}^n  \pb{y_i=\ast},
\label{eq:erasure-yi}
\end{equation}
and $\mathbbm{1}\{\cdot\}$ denotes the indicator function.

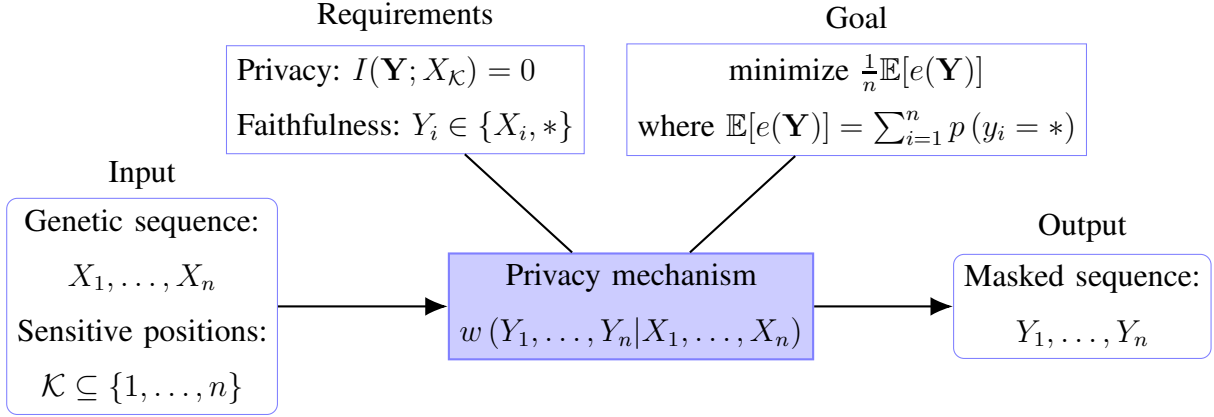
\begin{figure}[t]
\begin{center}
\begin{tikzpicture}
\tikzstyle{rec}=[shape=rectangle,draw=blue!50,fill=blue!20,align=center,thick]
\tikzstyle{condition}=[shape=rectangle,draw=blue!50,align=left]
\tikzstyle{zero}=[shape=rectangle,draw=blue!50,align=center,rounded corners]
\tikzstyle{goal}=[shape=rectangle,draw=blue!50,align=center]

\node[rec] (P) at (9.5,0) {Privacy mechanism \\ $\wb{Y_1,\ldots,Y_n|X_1,\ldots,X_n}$};

\node[zero] (In) at (3,0) {Genetic sequence: \\ $X_1,\ldots,X_n$ \\ Sensitive positions: \\ $\mathcal{K} \subseteq \{1,\ldots,n\}$};

\draw[-{Latex[length=3mm, width=2.5mm]},thick] (In)--(P);

\node[zero] (Out) at (15.5,0) {Masked sequence: \\ $Y_1,\ldots,Y_n$};

\draw[-{Latex[length=3mm, width=2.5mm]},thick] (P)--(Out);

\node[condition] (C) at (6.5,2.75) {Privacy: $I(\bY;X_{\mathcal{K}}) = 0$ \\
Faithfulness: $Y_i \in \{X_i, \ast\}$ };

\draw[thick] (P)--(C);

\node[goal] (G) at (12.5,2.75) {minimize $\frac{1}{n}\mathbb{E}[e(\bY)]$ \\ where $\mathbb{E}[e(\bY)] = \sum_{i=1}^{n} \pb{y_i = \ast}$ };
\draw[thick] (P)--(G);

\node (1) at (6.5,3.85) {Requirements};
\node (4) at (12.5,3.85) {Goal};

\node (2) at (15.5,1.05) {Output};
\node (3) at (3,1.75) {Input};

\end{tikzpicture}
\end{center}
\caption{An illustration of $(n,\mathcal{K})$ genotype-hiding privacy mechanism. 
The mechanism takes as input a genetic sequence along with a set of sensitive positions and outputs a masked sequence with erasures.
We require the faithfulness and privacy conditions to be satisfied, and the goal is to minimize the expected number of erasures in the output.
}
\label{fig:block}
\end{figure}


A formal description of the genotype-hiding problem is given below. We start by defining the privacy mechanism for the genotype-hiding problem as follows.
\begin{definition}
	An $(n,\mathcal{K})$ privacy mechanism for a given data generative distribution $\pb{\bx}$ with input alphabet $\mathcal{X}^n$ and output alphabet $\mathcal{Y}^n$ is defined by a probabilistic encoding function
	\[\mathsf{Enc}: \mathcal{X}^n \rightarrow \mathcal{Y}^n,\]
	where $\mathsf{Enc}$ satisfies both the faithfulness condition ($Y_i\in \{X_i,\ast\},\forall i$) and the privacy condition ($I(X_{\mathcal{K}};\bf{Y}) = 0$). 
\end{definition}

The performance of the privacy mechanism is measured by the expected number of erasures per symbol in an output sequence $\bf{y}$.
This measure captures the distortion between the input and output sequences induced by a set of single-letter erasures.
Following the convention, we define the \emph{rate} of a privacy mechanism as the fraction of positions that are not erased in the output:
\begin{definition}
 	The rate of an $(n,\mathcal{K})$ privacy mechanism for a given data generative distribution $\pb{\bx}$ is defined by $1 - \frac{1}{n} \mathbb{E}[e(\mathsf{Enc}(\bX))]$ per symbol.
\end{definition}

\begin{definition}
    For any given data distribution $\pb{\bx}$, a rate $R$ is achievable if there exists an $(n,\mathcal{K})$  privacy mechanism such that 
    \begin{equation}
      1 - \frac{1}{n} \mathbb{E}[e(\bY)] \geq R,  
    \end{equation}
    where $\bY = \mathsf{Enc}(\bX)$.
\end{definition}
\noindent Clearly, if $R$ is achievable then $R - \epsilon$ for any $\epsilon > 0$ is also achievable by the definition, so we are interested in finding the maximum achievable rate.


It is worth noting that the encoder $\mathsf{Enc}(\cdot)$ can be potentially stochastic, so we may use conditional probabilities $\wb{\by|\bx}$ to represent the encoding function. If we treat conditional probabilities $\wb{\by|\bx}$ where $\bx \in \mathcal{X}^n, \by \in \mathcal{Y}^n$ as decision variables, the genotype-hiding problem can be defined as the following optimization problem:
\begin{equation}
\label{eq:optimization}
\begin{aligned}
\underset{\wb{\by|\bx}}{\text{maximize}}
\quad & 
1 - \frac{1}{n}\sum_{i=1}^n  \pb{y_i = \ast}  &\\
\text{subject to} \quad      & I(X_{\cK};\bY) = 0 & \text{(Privacy)}\\
                         & Y_i  \in \{X_i, \ast\},\forall i &   \text{(Faithfulness)} 
\end{aligned}
\end{equation}
Note that this problem maximizes the information rate (utility) under the stringent privacy constraint such that no information about the sensitive positions is leaked.

If we express the objective and the constraints explicitly in terms of the conditional probabilities $\wb{\by|\bx}$, the optimization problem \eqref{eq:optimization} can be viewed as an instance of linear programming (LP).
However, the scale of the problem is intractable in practice, given the exponential blowup in the number of variables and constraints as the length of the sequence $n$ grows; the number of decision variables is $|\mathcal{X}|^n |\mathcal{Y}|^n$, and the number of constraints is in the order of $|\mathcal{X}|^{|\mathcal{K}|} |\mathcal{Y}|^n  + n\,|\mathcal{X}||\mathcal{Y}|$.

Therefore, the ultimate goal of this paper is to identify a solution to the genotype-hiding problem in a tractable and computationally-efficient manner.
To this end, we first present an achievable privacy mechanism as well as an upper bound on the maximum achievable rate. Then we show that the proposed privacy mechanism is computationally efficient for a particular data generative distribution, namely hidden Markov models, which is of broad interest in our motivating application in genomics.



\section{Performance Bounds} 
\label{sec:bound}

In this section, we state the performance bounds on the achievable rate in the following theorems.
\begin{theorem}
For a given data distribution $\pb{\bx}$, a rate $R$ is achievable if
\begin{equation}
\label{eq:thm-achievability}
R  \leq \frac{1}{n}\sum_{i=1}^n \sum_{x_i \in \cX} \mathbb{E}_{Y_{[i-1]}}\left[ \min_{u \in \cX^{|\cK|}} \pb{x_i|x_{\cK}=u, Y_{[i-1]}}\right] . 
\end{equation}
\end{theorem}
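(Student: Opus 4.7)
The plan is to prove achievability by constructing a sequential privacy mechanism that processes the positions in the natural order $1,2,\ldots,n$, releasing $X_i$ with the largest probability compatible with a local privacy invariant. Specifically, I will enforce that for every $i$ the conditional law $\Pb{Y_i = \cdot \mid X_{\cK}, Y_{[i-1]}}$ does not depend on $X_{\cK}$; composing this across $i$ will then yield $\pb{\by\mid x_{\cK}} = \pb{\by}$ and hence the global privacy condition $I(X_{\cK};\bY)=0$. The non-erasure probability at each step will then be read off directly and summed to recover the claimed rate.

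Concretely, for each history $y_{[i-1]}:=(y_1,\ldots,y_{i-1})$ and each $x_i\in\cX$, I would set the target release probability to
\[
\beta(x_i, y_{[i-1]}) := \min_{u \in \cX^{|\cK|}} \pb{x_i \mid x_{\cK}=u, y_{[i-1]}},
\]
and specify the encoder by
\[
\wb{Y_i=x_i \mid X_i=x_i, X_{\cK}=u, y_{[i-1]}} := \frac{\beta(x_i, y_{[i-1]})}{\pb{x_i \mid x_{\cK}=u, y_{[i-1]}}},
\]
assigning the remaining mass to $\ast$. Because $\beta(x_i, y_{[i-1]}) \le \pb{x_i \mid x_{\cK}=u, y_{[i-1]}}$ for every $u$ by construction, the ratio lies in $[0,1]$ whenever the denominator is positive, and faithfulness is automatic since $Y_i \in \{X_i,\ast\}$. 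A direct expansion, using that faithfulness forces $X_i=x_i$ whenever $Y_i=x_i$, gives
\[
\Pb{Y_i=x_i \mid X_{\cK}=u, Y_{[i-1]}=y_{[i-1]}} = \pb{x_i \mid x_{\cK}=u, y_{[i-1]}}\cdot \frac{\beta(x_i, y_{[i-1]})}{\pb{x_i \mid x_{\cK}=u, y_{[i-1]}}} = \beta(x_i, y_{[i-1]}),
\]
which does not depend on $u$. The erasure probability is consequently also $u$-independent, and chaining across positions establishes the privacy condition.

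For the rate, the same identity yields $\Pb{Y_i=x_i \mid Y_{[i-1]}} = \beta(x_i, Y_{[i-1]})$, so that $\Pb{Y_i \ne \ast} = \sum_{x_i \in \cX} \mathbb{E}_{Y_{[i-1]}}[\beta(x_i, Y_{[i-1]})]$, and averaging over $i$ reproduces exactly the right-hand side of \eqref{eq:thm-achievability}. The main obstacle I foresee is conceptual rather than algebraic: one has to recognize that $\min_{u} \pb{x_i \mid x_{\cK}=u, y_{[i-1]}}$ is the largest release probability simultaneously consistent with every value of the sensitive positions, and then verify that enforcing this per-step independence propagates through the chain rule to give full joint independence of $\bY$ and $X_{\cK}$. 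A secondary subtlety requiring care is the treatment of histories $y_{[i-1]}$ incompatible with some $u$, where the minimum is effectively restricted to the support of $\pb{x_{\cK}\mid y_{[i-1]}}$ and the corresponding entries of $w$ are immaterial because they are never invoked on positive-probability inputs.
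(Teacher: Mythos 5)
Your construction is exactly the mechanism the paper uses (equation \eqref{eq:encode}): release $X_i$ with probability $\min_{u}\pb{x_i\mid x_{\cK}=u,y_{[i-1]}}/\pb{x_i\mid x_{\cK},y_{[i-1]}}$, verify that the induced conditional law of $Y_i$ given $(X_{\cK},Y_{[i-1]})$ is $u$-independent, chain over $i$ to obtain $I(X_{\cK};\bY)=0$, and read off the non-erasure probability to match the right-hand side of \eqref{eq:thm-achievability}. This is essentially the same proof as in Section~\ref{sec:mechanism}, so your proposal is correct and nothing further is needed.
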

A detailed description of the achievable scheme will be presented in Section~\ref{sec:mechanism}. 

The right-hand side of \eqref{eq:thm-achievability} may appear unconventional, given that conditioning on $Y_{[i-1]}$ for each $i$ makes the probability term generally hard to compute as the sequence length $n$ grows. 
However, this expression corresponds to a sequential mechanism where the encoder generates $Y_1,\ldots,Y_n$ one position at a time, and
an efficient update exists for incrementally expanding the conditioning set.
As an example, in Section~\ref{sec:hmm}, we present a concrete implementation of the privacy mechanism for data distributions governed by hidden Markov models, which indeed allows 
the right-hand side of~\eqref{eq:thm-achievability} to be efficiently computed.

\begin{theorem}
\label{thm:upper}
For a given data distribution $\pb{\bx}$, any achievable rate $R$ must satisfy
\begin{equation}
\label{eq:thm-upper-bound}
R  \leq \frac{1}{n}\sum_{i=1}^n \sum_{x_i \in \cX} \min_{u \in \cX^{|\cK|}} \pb{x_i|x_{\cK}=u}. 
\end{equation}
\end{theorem}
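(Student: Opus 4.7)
\medskip
\noindent\textbf{Proof proposal.}
The plan is to bound the non-erasure probability at each position separately by combining the faithfulness constraint with the perfect-privacy constraint, then average over $i$. Writing the rate as
\[
R \;=\; \frac{1}{n}\sum_{i=1}^{n}\Pr(Y_i\neq\ast) \;=\; \frac{1}{n}\sum_{i=1}^{n}\sum_{x_i\in\cX}\Pr(Y_i=x_i),
\]
I would aim to show the pointwise inequality
\[
\Pr(Y_i=x_i)\;\le\;\min_{u\in\cX^{|\cK|}}\pb{x_i\mid x_{\cK}=u}
\]
for every $i$ and every $x_i\in\cX$; summing over $x_i$ and $i$ then yields \eqref{eq:thm-upper-bound}.

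To prove the pointwise inequality, I would fix $i$ and $x_i\in\cX$ and proceed in two short steps. First, faithfulness gives $\{Y_i=x_i\}\subseteq\{X_i=x_i\}$, hence for any $u\in\cX^{|\cK|}$ with $\pb{x_{\cK}=u}>0$,
\[
\Pr(Y_i=x_i\mid X_{\cK}=u)\;=\;\Pr(Y_i=x_i,X_i=x_i\mid X_{\cK}=u)\;\le\;\pb{x_i\mid x_{\cK}=u}.
\]
Second, privacy $I(X_{\cK};\bY)=0$ implies that $\bY$ and $X_{\cK}$ are independent, so in particular $Y_i$ is independent of $X_{\cK}$, giving $\Pr(Y_i=x_i\mid X_{\cK}=u)=\Pr(Y_i=x_i)$ for every $u$ in the support of $X_{\cK}$. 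Combining the two and minimizing over $u$ yields the desired pointwise bound. Values of $u$ outside the support of $X_{\cK}$ can only decrease the minimum on the right-hand side of \eqref{eq:thm-upper-bound}, so the bound is consistent with the convention (or we can restrict the min to the support, which only weakens the statement).

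There is no real obstacle here; the only subtlety is being careful about conditioning on zero-probability events $\{X_{\cK}=u\}$, which is handled by restricting $u$ to the support and observing that including additional $u$'s in the minimization in \eqref{eq:thm-upper-bound} can only make the upper bound tighter. Everything else is a direct application of faithfulness (giving the per-letter inequality) and of statistical independence (allowing the bound to be taken uniformly in $u$), followed by summation.
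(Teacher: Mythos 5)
Your proposal is correct and follows essentially the same route as the paper's proof: both decompose $\Pr(Y_i\neq\ast)=\sum_{x_i\in\cX}\Pr(Y_i=x_i)$, invoke the privacy condition to equate $\Pr(Y_i=x_i)$ with its value conditioned on the least-likely $x_{\cK}=u$, and use faithfulness to bound that conditional probability by $\pb{x_i|x_{\cK}=u}$ before summing over $x_i$ and $i$. Your event-inclusion step is just the paper's factorization $\pb{y_i=x_i|x_{\cK}=u}=\pb{x_i|x_{\cK}=u}\,\wb{y_i=x_i|x_i,x_{\cK}=u}\le\pb{x_i|x_{\cK}=u}$ in different clothing, and the support caveat you mention is handled implicitly in the same way by the paper.
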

It is worth noting that, given a data distribution $\pb{\bx}$, each summand in the right-hand side of \eqref{eq:thm-upper-bound} represents
the conditional probability of the observation  $x_i$ at coordinate $i$ when the sensitive positions $x_\mathcal{K}$ take on the \emph{least}-likely values, which can be determined from the given $\pb{\bx}$.

\begin{proof}
From \eqref{eq:erasure-yi}, we know that to establish \eqref{eq:thm-upper-bound}, it is sufficient to show 
\begin{equation}
  \pb{y_i \neq \ast} \leq \sum_{x_i \in \cX} \min_{u \in \cX^{|\cK|}} \pb{x_i|x_{\cK}=u}
\end{equation}
for any mechanism satisfying the privacy and faithfulness conditions. Consider
\begin{align}
 \pb{y_i \neq \ast} & = \sum_{y_i \in \cX} \pb{y_i} \nonumber \\
& \utag{a}{=} \sum_{y_i \in \cX} \min_{u} \pb{y_i|x_{\cK}=u} \nonumber\\
& \utag{b}{=} \sum_{y_i \in \cX}  \min_{u} \pb{y_i=x_i|x_{\cK}=u} \nonumber\\
& = \sum_{x_i \in \cX} \min_{u} \pb{x_i|x_{\cK}=u} \pb{y_i=x_i|x_i,x_{\cK}=u} \nonumber \\
& \utag{c}{\leq} \sum_{x_i \in \cX}  \min_{u} \pb{x_i|x_{\cK}=u},
\end{align}
where \uref{a} is due to the fact that $Y_i$ is independent of $X_{\cK}$ (privacy condition); 
\uref{b} follows from the faithfulness condition $Y_i \in \{X_i, \ast\}$; and \uref{c} follows from the fact that probabilities are bounded above by $1$.
\end{proof}


Although not true in general, the upper bounds in \eqref{eq:thm-achievability} and \eqref{eq:thm-upper-bound} match under special circumstances, implying the optimality of an achievable mechanism. 
That is,
\begin{equation}
\label{eq:bound-tight}
	\sum_{x_{i} \in \cX}\sum_{y_{[i-1]}}   \pb{y_{[i-1]}} \min_{x_{\cK}} \pb{x_{i}|x_{\cK},y_{[i-1]}}  =  \sum_{x_i \in \cX} \min_{u \in \cX^{|\cK|}} \pb{x_i|x_{\cK}=u}.
\end{equation}
We observe that a \emph{sufficient condition} for this equality is given by the following:
for any $x_i$, if
\begin{equation}
    u^{\ast} \in \arg\min_{u} \pb{x_i|x_{\cK}=u},
\end{equation}
then
\begin{equation}
  u^{\ast} \in \arg\min_{u} \pb{x_i|x_{\cK}=u,y_{[i-1]}}  
\end{equation}
for all possible $y_{[i-1]}$.
Intuitively, this means that for any given position $x_i$, the least-likely values of the (unobserved) sensitive positions $x_{\mathcal{K}}$
remains the same regardless of the positions that have been previously released in the output $y_{[i-1]}$ during the course of the mechanism.

A special case that satisfies this optimality condition is when random variables $X_1,\ldots,X_n$ form a Markov chain (\Ie $\pb{\bx}$ is induced by a Markov chain), with a single sensitive position. Without loss of generality, we assume $\cK=\{1\}$. 
\begin{corollary}[Markov chain]
\label{corollary}
If $X_1,\ldots,X_n$ forms a Markov chain and the sensitive position is $\mathcal{K}=\{1\}$, then a rate $R$ is achievable if and only if
\begin{equation}
R  \leq \frac{1}{n}\sum_{i=1}^n \sum_{x_i \in \cX} \min_{u \in \cX^{|\cK|}} \pb{x_i|x_{\cK}=u}. 
\end{equation}
\end{corollary}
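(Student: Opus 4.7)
My plan is to show the corollary by arguing that, in the Markov-chain case with $\cK=\{1\}$, the achievability bound from Theorem~1 already coincides with the converse bound in Theorem~\ref{thm:upper}. The ``only if'' direction is Theorem~\ref{thm:upper}. For ``if'', by Theorem~1 it suffices to show that for every $i$,
\[
\sum_{x_i\in\cX}\mathbb{E}_{Y_{[i-1]}}\!\left[\min_{u}\pb{x_i\mid Y_{[i-1]},x_1=u}\right] = \sum_{x_i\in\cX}\min_{u}\pb{x_i\mid x_1=u},
\]
and, as the excerpt just before the corollary highlights, a sufficient condition is that every unconditional minimizer $u^{\ast}\in\arg\min_{u}\pb{x_i\mid x_1=u}$ is also a conditional minimizer $u^{\ast}\in\arg\min_{u}\pb{x_i\mid y_{[i-1]},x_1=u}$ for every realization $y_{[i-1]}$.

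I would establish the sufficient condition by proving an inductive \emph{affine-structure} lemma: for every $i\ge 2$ and every $y_{[i-1]}$ with positive probability under the mechanism, there exist $\alpha_i(y_{[i-1]})\ge 0$ and $\beta_i(x_i,y_{[i-1]})$, both $u$-independent, with
\[
\pb{x_i\mid y_{[i-1]},x_1=u} = \alpha_i\,\pb{x_i\mid x_1=u} + \beta_i(x_i).
\]
Because $\alpha_i\ge 0$, the set $\arg\min_u$ of the right-hand side either equals $\arg\min_u\pb{x_i\mid x_1=u}$ (when $\alpha_i>0$) or is all of $\cX$ (when $\alpha_i=0$), settling the sufficient condition in both cases.

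The base case $i=2$ is immediate: the privacy constraint together with $\cK=\{1\}$ forces $Y_1=\ast$ almost surely, so $\pb{x_2\mid Y_1=\ast,x_1=u}=\pb{x_2\mid x_1=u}$, giving $\alpha_2=1$, $\beta_2=0$. For the inductive step I would use the Markov property to write $\pb{x_{i+1}\mid y_{[i]},x_1=u}=\sum_{x_i}\pb{x_i\mid y_{[i]},x_1=u}\,\pb{x_{i+1}\mid x_i}$, then split on $y_i$. If $y_i\neq\ast$, faithfulness forces $x_i=y_i$, so the sum equals $\pb{x_{i+1}\mid x_i=y_i}$, which is $u$-free and yields $\alpha_{i+1}=0$. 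If $y_i=\ast$, I would apply Bayes' rule with the optimal release probability $\min_{u'}\pb{x_i\mid y_{[i-1]},x_1=u'}/\pb{x_i\mid y_{[i-1]},x_1=u}$ and the inductive hypothesis; using $\min_{u'}[\alpha_i\pb{x_i\mid x_1=u'}+\beta_i(x_i)]=\alpha_i\min_{u'}\pb{x_i\mid x_1=u'}+\beta_i(x_i)$ (valid because $\alpha_i\ge 0$), the $\beta_i(x_i)$ terms cancel between numerator and denominator of the Bayes update, leaving $\pb{x_i\mid y_{[i-1]},y_i=\ast,x_1=u}=[\pb{x_i\mid x_1=u}-\min_{u'}\pb{x_i\mid x_1=u'}]/Z$ for a $u$-independent normalizer $Z$. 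Multiplying by $\pb{x_{i+1}\mid x_i}$ and summing out $x_i$ then gives another affine expression in $\pb{x_{i+1}\mid x_1=u}$ with positive slope $1/Z$.

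The main obstacle is keeping the recursion closed under the affine class through the Bayes update, which hinges on the $\beta_i$ cancellation described above. A subtle issue is the $\alpha_i=0$ branch: then the Bayes update looks formally like $0/0$, but whenever $\alpha_i=0$ the conditional posterior is $u$-independent, the optimal release probability is identically one, and $y_i=\ast$ carries zero probability, so this branch never actually triggers and the induction remains well defined.
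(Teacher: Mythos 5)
Your proposal is correct and follows essentially the same route as the paper's Appendix~A proof: both verify the stated sufficient condition by induction, splitting on whether the newly conditioned symbol is erased, using Markovity to reduce to $\pb{x_{i+1}|x_i}$, and using the Bayes update under the mechanism (with the $u$-independent erasure normalizer) so that the subtracted $\min_{u}$ term is $u$-free. Your affine invariant $\pb{x_i|y_{[i-1]},x_1=u}=\alpha_i\,\pb{x_i|x_1=u}+\beta_i(x_i)$ is just the unrolled, position-indexed form of the paper's one-step relation \eqref{eq:exp-markov-1}, with your $\alpha_i=0$ branch playing the role of the paper's case $y_j\neq\ast$ in which the $\arg\min$ becomes all of $\cX$.
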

\noindent It is sufficient to justify the corollary by showing that the aforementioned sufficient condition holds. The proof is included in Appendix~\ref{appendix:corollary}. 

\section{Privacy mechanism}


\label{sec:mechanism}
In this section, we present a privacy mechanism for generating $\bY$ based on a given $\pb{\bx}$, whose 
performance matches the bound given in \eqref{eq:thm-achievability}, while satisfying both faithfulness and privacy conditions. 

Let us first recall the genotype-hiding problem introduced in \eqref{eq:optimization}, i.e.,
\begin{equation}
\begin{aligned}
\underset{\wb{\by|\bx}}{\text{maximize}}
\quad & 
1 - \frac{1}{n}\sum_{i=1}^n  \pb{y_i = \ast}  & \\
\text{subject to} \quad      & I(X_{\cK};\bY) = 0 & \text{(Privacy)}\\
                         & Y_i  \in \{X_i, \ast\},\forall i. &   \text{(Faithfulness)} 
\end{aligned}
\end{equation}
This problem is difficult to solve in its general form given the exponentially growing number of decision variables in $\wb{\by|\bx}$ as the sequence length $n$ grows.
Instead, we adopt a greedy optimization approach, whereby the erasure probability of $y_{i}$ is locally minimized, one position at a time, from $1$ to $n$.
In other words, for each $i=1,\ldots,n$, we solve
\begin{equation}
\label{eq:optimization-sub}
\begin{aligned}
\underset{\wb{y_i|\bx,y_{[i-1]}}}{\text{minimize}}
\quad & 
\pb{y_{i}=\ast|y_{[i-1]}}  & \\
\text{subject to} \quad      & I(X_{\cK};Y_{i}|Y_{[i-1]}) = 0 & \\
                         & Y_i  \in \{X_i, \ast\}, &   
\end{aligned}
\end{equation}
for any given $y_{[i-1]}$. Note that 
\begin{equation}
\label{eq:privacy-chain}
     I(X_{\cK};\bY) = \sum_{i=1}^n I(X_{\cK};Y_{i}|Y_{[i-1]}) = 0,
\end{equation}
by the chain rule, so if the first constraint of \eqref{eq:optimization-sub} is satisfied for all $i$, then the solution preserves the required privacy constraint $I(X_{\cK};\bY) = 0$ as defined in \eqref{eq:privacy}. The second constraint is inherited directly from the faithfulness condition. In other words, any solution satisfying the constraints of \eqref{eq:optimization-sub} for all $i$ will naturally be a feasible solution to the genotype-hiding problem in~\eqref{eq:optimization}.

We observe that solving the local optimization problem \eqref{eq:optimization-sub} gives rise to a sequential mechanism for generating $\bY$.
That is, we generate $\bY$ one position at a time, where the conditional distribution for $Y_{i}$ may depend on the values of $Y_{1},\ldots,Y_{i-1}$ that have been previously generated. 
The following defines our chosen privacy mechanism for any given position $i$, which is in fact an optimal solution to the local optimization problem \eqref{eq:optimization-sub}.
A detailed proof of the local optimality of this scheme is deferred to Section~\ref{section:greedy_optimality}.





\noindent\textbf{Privacy mechanism:} 
Generate each $Y_{i}$ according to the following conditional distribution
\begin{equation}
\label{eq:encode}
\wb{y_{i}|x_{i},x_{\cK},y_{[i-1]}}  = 
\begin{cases}
    \frac{\min_{u \in \cX^{|\cK|}} \pb{x_{i}|x_{\cK}=u,y_{[i-1]}}}{\pb{x_{i}|x_{\cK},y_{[i-1]}}}, & \text{if }y_{i}=x_{i},\\
    1 - \frac{\min_{u \in \cX^{|\cK|}} \pb{x_{i}|x_{\cK}=u,y_{[i-1]}}}{\pb{x_i|x_{\cK},y_{[i-1]}}}, & \text{if }y_{i}=\ast, \\
    0, & \text{otherwise,}
\end{cases}
\end{equation}
for any $x_{i}$, $x_{\cK}$ and $y_{[i-1]}$, where $[i-1]:=\{1,\ldots,i-1\}$. 

The expression for the erasure probability in the above mechanism can be intuitively understood as follows.
We first identify 
the values of the sensitive positions with the smallest likelihood of generating the observed symbol $x_i$ at the $i$-th position (as indicated by the numerator in the fractional term), conditioned on the previously released positions $y_{[i-1]}$.
Note that $u$ is an auxiliary variable denoting the possible values in the alphabet $\cX^{|\cK|}$, whereas $x_\cK$ denotes the observed values at the sensitive positions.
We then choose the erasure probability such that, the probability of releasing the original symbol (without erasure) becomes identical among different hypothetical values of $x_\cK$, thus ensuring privacy.


It is worth noting that our privacy mechanism satisfies the faithfulness condition (\Ie $y_i \in \{x_i,\ast\}$) by design, so we only need to verify that it satisfies the privacy constraint \eqref{eq:privacy}. Before verifying the privacy constraint, we note the following properties of the mechanism.
\begin{enumerate}[label=(\arabic*)]
    \item If $i \in \cK$, then
    \begin{equation}
        \min_{u \in \cX^{|\cK|}} \pb{x_{i}|x_{\cK}=u,y_{[i-1]}} = 0,
    \end{equation}
which yields
\begin{equation}
\wb{y_{i}=\ast|x_{i},x_{\cK},y_{[i-1]}}
    = 1.
\end{equation}
This implies that $X_{i}$ is \emph{always erased} if it corresponds to one of the sensitive positions in $\cK$.
    
\item We notice from \eqref{eq:encode} that 
$X_{i}$ is \emph{not} erased with some nonzero probability, so this mechanism is strictly better than the na\"{i}ve approach of always erasing any position that have a nonzero correlation with the sensitive positions.
\end{enumerate}

\begin{proof}[Proof of privacy]
To show that the proposed mechanism in \eqref{eq:encode} satisfies the privacy condition \eqref{eq:privacy}, 
it is sufficient to show 
\begin{equation}
\label{eq:privacy-conditional}
    I(Y_{i};X_{\cK}|Y_{1},\ldots,Y_{i-1}) = 0,
\end{equation}
for all $i=1,\ldots,n$, since this implies
\begin{equation}
     I(X_{\cK};\bY) = \sum_{i=1}^n I(X_{\cK};Y_{i}|Y_{[i-1]}) = 0
\end{equation}
by the chain rule.
To establish \eqref{eq:privacy-conditional}, we will equivalently prove that 
\begin{equation}
\label{eq:privacy-conditional-prob}
    \pb{y_{i}|x_{\cK},y_{[i-1]}} = \pb{y_{i}|y_{[i-1]}}
\end{equation}
for any $x_{\cK}$, $y_{[i-1]}$ and $y_i$.
Since 
\begin{equation}
    \pb{y_{i}|x_{\cK},y_{[i-1]}} 
     = \sum_{x_{i} \in \cX} \pb{x_{i}|x_{\cK},y_{[i-1]}} \wb{y_{i}|x_{i},x_{\cK},y_{[i-1]}},
\end{equation}
by substituting \eqref{eq:encode}, we have 
\begin{align}
  \pb{y_{i}=\ast|x_{\cK},y_{[i-1]}} 
  &  = \sum_{x_{i}} \pb{x_{i}|x_{\cK},y_{[i-1]}} \wb{y_{i}=\ast|x_{i},x_{\cK},y_{[i-1]}} \nonumber\\
  &  = 1 - \sum_{x_{i} \in \cX}\min_{u \in \cX^{|\cK|}}\pb{x_{i}|x_{\cK}=u,y_{[i-1]}}. 
  \label{eq:privacy-verify-1}
\end{align}
Similarly, for $y_i \in \cX$, we have 
\begin{align}
 \pb{y_{i}|x_{\cK},y_{[i-1]}} 
&  = \sum_{x_{i} \in \cX} \pb{x_{i}|x_{\cK},y_{[i-1]}} \wb{y_{i}=x_{i}|x_{i},x_{\cK},y_{[i-1]}} \nonumber\\
&  = \sum_{x_{i} \in \cX}\min_{u \in \cX^{|\cK|}}\pb{x_{i}|x_{\cK}=u,y_{[i-1]}}. \label{eq:privacy-verify-2}
\end{align}
We can observe that the right-hand sides of both \eqref{eq:privacy-verify-1} and \eqref{eq:privacy-verify-2} are independent of $x_{\cK}$, and hence by combining \eqref{eq:privacy-verify-1} and \eqref{eq:privacy-verify-2}, we have
\begin{equation}
\label{eq:revision-t1}
    \pb{y_{i}|x_{\cK},y_{[i-1]}} = \pb{y_{i}|y_{[i-1]}},
\end{equation}
for any $x_{\cK}$, $y_{[i-1]}$ and $y_i$, which finishes the proof of \eqref{eq:privacy-conditional-prob}. 
\end{proof}

Finally, we can easily verify that our sequential privacy mechanism \eqref{eq:encode} achieves the rate
\begin{align}
    1 - \frac{1}{n} \sum_{i=1}^{n} \pb{y_i = \ast} 
    & = 1 - \frac{1}{n} \sum_{i=1}^{n} \sum_{y_{[i-1]}}  \pb{y_{i}=\ast|y_{[i-1]}} \pb{y_{[i-1]}} \nonumber \\
    & \utag{a}{=} 1 - \frac{1}{n} \sum_{i=1}^{n} \sum_{y_{[i-1]}}   \pb{y_{[i-1]}} \left( 1 - \sum_{x_{i} \in \cX}\min_{u \in \cX^{|\cK|}}\pb{x_{i}|x_{\cK}=u,y_{[i-1]}} \right) \nonumber \\
    & =  \frac{1}{n} \sum_{i=1}^{n} \sum_{y_{[i-1]}}   \pb{y_{[i-1]}}  \sum_{x_{i} \in \cX}\min_{u \in \cX^{|\cK|}}\pb{x_{i}|x_{\cK}=u,y_{[i-1]}}  \nonumber \\
    & =  \frac{1}{n} \sum_{i=1}^{n} \sum_{x_{i} \in \cX}\sum_{y_{[i-1]}}   \pb{y_{[i-1]}} \min_{x_{\cK}} \pb{x_{i}|x_{\cK},y_{[i-1]}}, 
\end{align}
where \uref{a} follows by \eqref{eq:privacy-verify-1} and \eqref{eq:revision-t1}. The final expression is identical to the right-hand side of \eqref{eq:thm-achievability} as desired.


\vspace{0.5em}
\noindent \emph{Example.}
We present an example to illustrate the operations of the proposed privacy mechanism in a simplified setting.
Let us consider a data distribution $\pb{\bx}$ where $X_1,\ldots,X_n$ form a Markov chain, as in Corollary~\ref{corollary}, and a single sensitive position $\mathcal{K}= \{1\}$.

By inspecting the privacy mechanism in \eqref{eq:encode}, we know that if $y_{i-1} \neq \ast$ for some $i>1$, then 
\begin{align}
\pb{x_i|x_{\mathcal{K}} = u, y_{[i-1]}}
& = \pb{x_i|x_{\mathcal{K}} = u, y_{[i-1]}, x_{i-1}=y_{i-1}} 
 = \pb{x_i|x_{i-1}=y_{i-1}}, 
\end{align}
for any $x_i$ and $y_{[i-1]}$ by the Markov property and the fact that $\mathcal{K} = \{1\}$. This implies that
\begin{align}
\wb{y_{i} = x_i|x_{i},x_{\cK},y_{[i-1]}}  = \frac{\min_{u \in \cX^{|\cK|}} \pb{x_{i}|x_{\cK}=u,y_{[i-1]}}}{\pb{x_{i}|x_{\cK},y_{[i-1]}}} 
& = \frac{\pb{x_i|x_{i-1}=y_{i-1}}}{\pb{x_i|x_{i-1}=y_{i-1}}}  = 1,
\end{align}
which means that if $y_{i-1} \neq \ast$ then $y_{i} \neq \ast$ with probability one.


Thus, when $\pb{\bx}$ is specified by a Markov chain, we see that the privacy mechanism erases all positions within a window from the sensitive position and releases the rest without erasure, and the size of the window is stochastically chosen.
This observation suggests that, in contrast to the heuristic approach of deterministically choosing a window for erasure, our mechanism introduces additional uncertainty about sensitive data (in fact achieving perfect privacy) by randomizing the choice of the window.
Later in Section~\ref{sec:simulation}, we present a simulation experiment comparing our mechanism with the deterministic window-based erasure approach with respect to the privacy-utility trade-off, based on a more realistic data distribution defined by hidden Markov models.

\section{
Local optimality
}
\label{section:greedy_optimality}


In the previous section, we proposed a privacy mechanism for the genotype-hiding problem satisfying both privacy and faithfulness conditions.
Here, we provide further insights into the optimality of the proposed mechanism.
We first prove that the mechanism is indeed an optimal solution to the local optimization problem in \eqref{eq:optimization-sub} as claimed, and thus can be viewed as a greedy solution to the general genotype-hiding problem in \eqref{eq:optimization} given a fixed variable ordering (\Ie the order in which $Y_i$'s are sampled).
We then present a negative result to inform future investigation, showing that finding an optimal variable ordering for the mechanism is intractable (NP-hard) in general, thus illustrating the limits of current techniques in achieving global optimality.

\subsection{
Optimality with respect to the local optimization problem} 
\label{subsec:interpretation}
Let us first recall the local optimization problem \eqref{eq:optimization-sub}, \Ie 
\begin{equation}
\begin{aligned}
\underset{\wb{y_i|\bx,y_{[i-1]}}}{\text{minimize}}
\quad & 
\pb{y_{i}=\ast|y_{[i-1]}}  & \\
\text{subject to} \quad      & I(X_{\cK};Y_{i}|Y_{[i-1]}) = 0 & \\
                         & Y_i  \in \{X_i, \ast\}. &   
\end{aligned}
\end{equation}
As we have shown,
\begin{equation}
     I(X_{\cK};\bY) = \sum_{i=1}^n I(X_{\cK};Y_{i}|Y_{[i-1]}) = 0,
\end{equation}
by the chain rule, so any solution satisfying the constraints of \eqref{eq:optimization-sub} for all $i$ is a feasible solution to the general genotype-hiding problem in~\eqref{eq:optimization}. 

We now show that the privacy mechanism in \eqref{eq:encode} is optimal with respect to the above optimization problem.
First, for any given $y_{[i-1]}$, note that  
\begin{align}
     \pb{y_{i}=\ast|y_{[i-1]}} 
    &  = 1 - \sum_{y_{i} \in \cX}\pb{y_{i}|y_{[i-1]}} \nonumber\\
    &  \utag{a}{=} 1 - \sum_{y_{i} \in \cX}\min_{x_{\cK}} \pb{y_{i}|x_{\cK},y_{[i-1]}} \nonumber\\
    &  \utag{b}{=} 1 - \sum_{y_{i} \in \cX}\min_{x_{\cK}} \pb{y_{i}=x_{i}|x_{\cK},y_{[i-1]}} \nonumber\\
    &  = 1 - \sum_{x_{i} \in \cX}\min_{x_{\cK}} \pb{x_{i}|x_{\cK},y_{[i-1]}}\wb{y_{i}=x_{i}|x_{i},x_{\cK},y_{[i-1]}} \nonumber\\
    &  \utag{c}{\geq} 1 - \sum_{x_{i} \in \cX}\min_{x_{\cK}} \pb{x_{i}|x_{\cK},y_{[i-1]}},
\end{align}
where \uref{a} follows from the privacy condition, \uref{b} follows from the faithfulness condition, and \uref{c} holds because probability values are at most $1$. 
This implies that any feasible solution to the local optimization problem \eqref{eq:optimization-sub} has to satisfy 
\begin{equation}
     \pb{y_{i}=\ast|y_{[i-1]}}  \geq 1 - \sum_{x_{i} \in \cX}\min_{x_{\cK}} \pb{x_{i}|x_{\cK},y_{[i-1]}},
\end{equation}
and that it is optimal if the last step holds with equality, \Ie 
\begin{equation}
    \min_{x_{\cK}} \pb{x_{i}|x_{\cK},y_{[i-1]}}\wb{y_{i}=x_{i}|x_{i},x_{\cK},y_{[i-1]}} =\min_{x_{\cK}} \pb{x_{i}|x_{\cK},y_{[i-1]}},
\end{equation}
for any $x_i$ and $y_{[i-1]}$.

By plugging in the proposed mechanism in \eqref{eq:encode}, we have
\begin{align}
    &\min_{x_{\cK}} \pb{x_{i}|x_{\cK},y_{[i-1]}}\wb{y_{i}=x_{i}|x_{i},x_{\cK},y_{[i-1]}} \nonumber \\
    = &\min_{x_{\cK}} \pb{x_{i}|x_{\cK},y_{[i-1]}}\frac{\min_{u \in \cX^{|\cK|}} \pb{x_{i}|x_{\cK}=u,y_{[i-1]}}}{\pb{x_{i}|x_{\cK},y_{[i-1]}}} \nonumber\\
    = &\min_{x_{\cK}} \min_{u \in \cX^{|\cK|}} \pb{x_{i}|x_{\cK}=u,y_{[i-1]}} \nonumber \\
    = &\min_{x_{\cK}} \pb{x_{i}|x_{\cK},y_{[i-1]}},
\end{align}
where the last step follows because the two minimizations, both over the alphabet of $X_{\mathcal{K}}$, are equivalent and can be merged. 
This implies that the mechanism \eqref{eq:encode} attains the minimum probability of erasing $Y_{i}$ and thus is an optimal solution to the local optimization problem \eqref{eq:optimization-sub}. 
Therefore, our sequential privacy mechanism can be viewed as a locally-optimal algorithm for solving the general genotype-hiding problem \eqref{eq:optimization}, given a fixed variable ordering. 


\subsection{NP-hardness of finding an optimal variable ordering}
\label{sec:NPhard}
So far, we considered the privacy mechanism that generates a masked sequence $Y_1,\dots,Y_n$ in a linear order from $1$ to $n$.
A natural question is then whether this linear ordering is optimal in terms of the erasure rate that the locally-optimal mechanism achieves.
Here, we illustrate the difficulty of determining the optimal variable ordering for the mechanism from a complexity theory perspective, by proving that it is NP-hard in general.
This suggests that devising an efficient mechanism with better optimality guarantees in the general setting requires additional assumptions or techniques to circumvent this impossibility result, which is an interesting direction for further research.

\revise{

}

To formalize the problem, let $(o_1,\ldots,o_n)$ be any permutation of $(1,\ldots,n)$.
We consider generating $\bY$ in the order of $o_1,\ldots,o_n$ instead.
In this setting, the privacy mechanism \eqref{eq:encode} is defined by the conditional distribution
\begin{equation}
\label{eq:encode-modified}
\wb{y_{o_i}|x_{o_i},x_{\cK},y_{o_{[i-1]}}} = 
\begin{cases}
    \frac{\min_{u \in \cX^{|\cK|}} \pb{x_{o_i}|x_{\cK}=u,y_{o_{[i-1]}}}}{\pb{x_{o_i}|x_{\cK},y_{o_{[i-1]}}}}, & \text{if }y_{o_i}=x_{o_i},\\
    1 - \frac{\min_{u \in \cX^{|\cK|}} \pb{x_{o_i}|x_{\cK}=u,y_{o_{[i-1]}}}}{\pb{x_{o_i}|x_{\cK},y_{o_{[i-1]}}}}, & \text{if }y_{o_i}=\ast,\\
    0, & \text{otherwise},
\end{cases}
\end{equation}
for any $x_{o_i}$, $x_{\cK}$ and $y_{o_{[i-1]}}$, where $o_{[i-1]}:=\{o_1,\ldots,o_{i-1}\}$.
It is easy to see that the faithfulness and privacy conditions are still satisfied regardless of the ordering. 

In the following, we show that finding the best ordering $(o_1,\ldots,o_n)$ that minimizes the erasure rate of the mechanism is NP-hard by constructing a polynomial-time reduction of the well-known \emph{hitting set} problem~\cite{cormen2009introduction} to our problem.
More specifically, given an arbitrary instance of a hitting set problem, we construct an instance of the genotype-hiding problem for which finding the optimal ordering for the privacy mechanism is equivalent to solving the original hitting set problem.

At the core of this reduction is a bipartite graph, illustrated in Figure~\ref{fig:NP}, which we use to represent both an instance of the hitting set problem and to construct a corresponding instance of the genotype-hiding problem, as we explain in detail below.
To clarify the dimensions of the problems upfront, note that we represent a hitting set problem for $k$ sets over $m$ elements using a bipartite graph with $m$ left nodes and $k$ right nodes, and the resulting genotype-hiding problem is over a sequence of length $n=m+k$ with $k$ sensitive positions ($|\cK|=k$) and a specially constructed $\pb{\bx}$. 

\begin{figure}[htbp]
\begin{center}
\begin{tikzpicture}
\tikzstyle{fnode}=[shape=rectangle,draw=blue!50,fill=blue!20]
\tikzstyle{vnode}=[shape=circle,draw=blue!50,fill=blue!20]
\node[vnode] (v1) at (0,0) [label=left:$v_1$]{};
\node[vnode] (v2) at (0,-0.75) [label=left:$v_2$]{};
\node[vnode] (v3) at (0,-2) [label=left:$v_i$]{};
\node[vnode] (v4) at (0,-2.75) [label=left:$v_{i+1}$]{};
\node[vnode] (v5) at (0,-4) [label=left:$v_m$]{};
\node at (0,-1.25) {$\vdots$};
\node at (0,-3.25) {$\vdots$};
\node[fnode] (s1) at (4,-0.5) [label=right:$S_1$]{};
\node[fnode] (s2) at (4,-1.25) [label=right:$S_2$]{};
\node[fnode] (s3) at (4,-2.5) [label=right:$S_j$]{};
\node[fnode] (s4) at (4,-3.75) [label=right:$S_{k}$]{};
\node at (4,-1.75) {$\vdots$};
\node at (4,-3) {$\vdots$};
\draw (v1) -- (s1); 
\draw (v1) -- (s2);
\draw (s2) -- (v2);
\draw (v3) -- (s3);
\node at (2,-2) {$b_{i,j}$};
\draw (s3) -- (v4);
\draw (v4) -- (s4);
\draw (s3) -- (v1);
\draw (v5) -- (s4);

\draw [dashed] (-1.25,0.5) rectangle (0.75,-4.5);
\node at (-0.25,1) {$U=\{v_1,\ldots,v_m\}$};

\draw [dashed] (3.25,0.5) rectangle (5,-4.5);
\node[align=center] at (4.125,1) { $S_1,\dots,S_k\subseteq U$};
\end{tikzpicture}
\end{center}
\caption{
A graphical illustration of the bipartite graph used in our NP-hardness proof, representing an instance of the hitting set problem. The universe $U$ is represented by vertices on the left, sets are represented by vertices on the right, and the edges represent the inclusion of elements in each set. To facilitate reduction to the genotype-hiding problem, we associate each edge with an independent and uniformly random bit $b_{i,j}$. 
}
\label{fig:NP}
\end{figure}
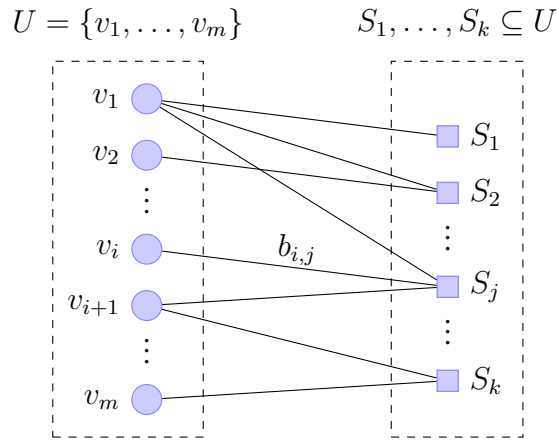

We first 
review the hitting set problem.
Consider a universe $U=\{v_1,\ldots,v_m\}$ and a collection of non-empty subsets $\mathcal{S} = \{S_1,\ldots,S_k\}$ such that $S_j \subseteq U$ for all $j \in [k]$. Without loss of generality, assume that $U = \bigcup_{j=1}^k  S_j$, and  $U = [m]$. A universe $U$ and sets $\{S_1,\ldots,S_k\}$ can be represented by a bipartite graph, as depicted in Fig.~\ref{fig:NP}.
The goal of the hitting set problem is to find the minimum cardinality $h^\ast$ of a set $V\subseteq U$ that satisfies $V\cap S_i\neq \emptyset$ for all $i$, that is
\begin{equation}
\label{eq:hitting-set}
	h^{\ast} = \min_{V \subseteq U: V \cap S_j \neq \emptyset, \forall j \in [k]} |V|.
\end{equation}

Next, we construct the corresponding genotype-hiding problem from the given hitting set problem instance $(U,\mathcal{S})$.
For any $i \in [m]$,  $j \in [k]$ such that $i \in S_j$, let $b_{i,j}$ be a random variable which is independently and uniformly drawn from $\{0,1\}$. In other words, each edge in the bipartite graph is associated with a random bit $b_{i,j}$ (see Fig.~\ref{fig:NP}).
Then, we define $\bX$ to be a sequence of length $n=m+k$ as follows.
Let $X_i$ for $i \in [m]$ be a tuple of random bits associated with edges connected to node $v_i$, \Ie
\begin{equation}
 X_i=(b_{i,j_1},\dots,b_{i,j_r}),   
\end{equation}
where $\{j_1,\dots,j_r\}=\{j:i\in S_j\}$. Next, let $X_{m+j}$ for $j\in [k]$ be
\begin{equation}
    X_{m+j} = \bigoplus_{i \in S_j} b_{i,j},
\end{equation}
which can be viewed as a parity check bit over the edges connected to node $S_j$.
In other words, the first $m$ positions of the sequence are uniform and independently distributed symbols (a tuple of random bits), whereas the remaining $k$ positions are parity check bits defined over the first $m$ positions.

Note that the joint distribution $\pb{\bf{x}}=\pb{x_1,\ldots,x_{m+k}}$ is succinctly characterized by the random bits $b_{i,j}$'s and the associated bipartite graph, and thus the description of the genotype-hiding problem can be generated in polynomial time with respect to $m$ and $k$.
In the following, we refer to the above data generating distribution as $\pb{\bx;U,\mathcal{S}}$, with respect to which the corresponding genotype-hiding problem is defined.

\begin{theorem}
\label{theorem:np}
Given a data generating distribution $p\left(\bx;U,S\right)$ for a sequence of length $n=m+k$ and sensitive positions $\cK=\{m+1,\dots,m+k\}$, finding the best ordering $(o_1,\ldots,o_{m+k})$ that minimizes the erasure rate of our mechanism \eqref{eq:encode-modified} is NP-hard.
\end{theorem}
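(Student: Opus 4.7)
\vspace{0.5em}
\noindent\textbf{Proof proposal.} The plan is to reduce the (NP-hard) minimum hitting set problem to the ordering optimisation for our mechanism on the distribution $\pb{\bx;U,\mathcal{S}}$. The core of the argument is to show that, for this particular $\pb{\bx}$, the greedy rule~\eqref{eq:encode-modified} collapses to a deterministic combinatorial procedure whose output is governed entirely by which sets $S_j$ are contained in the set of released coordinates.

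\emph{Step 1 (deterministic collapse).} Fix an ordering $(o_1,\ldots,o_{m+k})$ and let $V_{i-1}\subseteq[m]$ denote the set of non-sensitive positions released after the first $i-1$ steps. When $o_i\in\cK$, property (1) of Section~\ref{sec:mechanism} erases $o_i$ with probability one. When $o_i\in[m]$, I would use the independence and uniformity of the bits $b_{i,j}$ to argue that, conditionally on $X_{\cK}=u$ and on the previously released tuples $y_{o_{[i-1]}}$, the distribution of the tuple $X_{o_i}=(b_{o_i,j})_{o_i\in S_j}$ factorises over its coordinates $j$: coordinate $j$ is a Dirac mass at a value determined by $u_j$ (and the already-observed bits) whenever $S_j\setminus\{o_i\}\subseteq V_{i-1}$, and is an unbiased coin independent of $u$ otherwise. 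Choosing the ``wrong'' $u_j$ in any coordinate of the first type drives the joint probability to zero, so I expect to establish
\[
\min_{u\in\cX^{|\cK|}} \pb{x_{o_i}|x_{\cK}=u,y_{o_{[i-1]}}}
=
\begin{cases}
\pb{x_{o_i}|x_{\cK},y_{o_{[i-1]}}}, & \text{if no } S_j\subseteq V_{i-1}\cup\{o_i\},\\
0, & \text{otherwise.}
\end{cases}
\]
By~\eqref{eq:encode-modified}, position $o_i$ is then released with probability $1$ in the first case and erased with probability $1$ in the second, so the released set $V^{*}\subseteq[m]$ is the deterministic output of the combinatorial rule ``add $o_i$ to $V^{*}$ iff $V^{*}\cup\{o_i\}$ contains no $S_j$.''

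\emph{Step 2 (identification with hitting set).} The erasure rate under a given ordering equals $(k+m-|V^{*}|)/(m+k)$, so minimising the rate is equivalent to maximising $|V^{*}|$. Because $V^{*}$ never contains any $S_j$, the complement $[m]\setminus V^{*}$ is always a hitting set for $\mathcal{S}$, giving $|V^{*}|\le m-h^{\ast}$. Conversely, for any minimum hitting set $H^{\ast}$, processing the positions in $[m]\setminus H^{\ast}$ first, then the positions in $H^{\ast}$, then those in $\cK$, maintains the invariant $V_i\subseteq[m]\setminus H^{\ast}$ throughout the first phase; since $S_j\cap H^{\ast}\neq\emptyset$ we never have $S_j\subseteq V_i\cup\{o_i\}$, so all $m-h^{\ast}$ elements of $[m]\setminus H^{\ast}$ are released and $|V^{*}|=m-h^{\ast}$ is attained. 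Since $\pb{\bx;U,\mathcal{S}}$ is constructible from $(U,\mathcal{S})$ in polynomial time and the optimal ordering's erasure rate determines $h^{\ast}$, a polynomial-time algorithm for the decision problem ``does some ordering achieve rate $\ge (m-h)/(m+k)$?'' would decide the NP-hard problem ``is there a hitting set of size $\le h$?'', which establishes the claim.

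The principal obstacle is Step 1: the minimisation in~\eqref{eq:encode-modified} ranges over the exponentially large alphabet $\{0,1\}^{k}$, and one must carefully justify the ``$\pb{\cdot}$ or $0$'' dichotomy at every step by tracking how the previously released tuples in $y_{o_{[i-1]}}$ interact with the linear parity constraints defining $X_{\cK}$, thereby ruling out any accidental cancellation that would allow release of $o_i$ without the exact subset condition holding. A clean way to handle this is to work with the underlying bit-level linear-algebraic description and show that the joint support of $(X_{\cK},X_{o_i})$ conditional on $y_{o_{[i-1]}}$ factorises into a direct product across the edges incident to $o_i$, after which the minimisation reduces to the per-coordinate case analysed above.
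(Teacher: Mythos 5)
Your proposal is correct and follows essentially the same route as the paper: your Step~1 is exactly the paper's Lemma~\ref{lemma:deterministic} (the mechanism collapses to the deterministic rule ``erase $o_i$ iff some $S_j$ is contained in the released positions together with $o_i$,'' proved in the paper by induction using the same fresh-uniform-bit/one-time-pad argument you sketch), and your Step~2 matches the paper's two inequalities in Appendix~\ref{appendix:hitting-set} ($E_\pi$ is always a hitting set, and ordering the complement of a hitting set first attains it), yielding $e^{\ast}=h^{\ast}$. The only presentational difference is that the paper makes explicit the inductive step needed to replace conditioning on $y_{o_{[i-1]}}$ (which includes erasure symbols) by conditioning on the released values $x_{\tilde{\cE}_i}$, which is precisely the point you flag as the principal obstacle.
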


We provide a sketch of the proof here and defer the details to the Appendix.
First, we note the key property of $p\left(\bx;U,S\right)$ that whether or not our mechanism erases the $o_i$-th position is deterministic given the variable ordering, as stated in the following lemma.
\begin{lemma}
\label{lemma:deterministic}
Given a data generating distribution $p\left(\bx;U,S\right)$ 
for a sequence of length $n=m+k$ and sensitive positions $\cK=\{m+1,\dots,m+k\}$,
the conditional sampling distribution of our privacy mechanism satisfies
\begin{equation}
\label{eq:NP-determine}
\wb{y_{o_i}=\ast|x_{o_i},x_{\cK},y_{o_{[i-1]}}} \in \{0,1\}
\end{equation} 
for all $i$, given any ordering $\pi=(o_1,\ldots,o_{m+k})$. 
\end{lemma}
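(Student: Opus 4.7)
The plan is to prove Lemma~\ref{lemma:deterministic} by induction on $i$, showing that the mechanism's decision at each step on the distribution $\pb{\bx; U, \mathcal{S}}$ reduces to a deterministic combinatorial procedure driven by the ordering $\pi = (o_1, \ldots, o_{m+k})$ alone. The algebraic backbone is that $\pb{\bx; U, \mathcal{S}}$ is generated by independent uniform bits $\{b_{i,j}\}_{(i,j):\, i \in S_j}$; each $X_l$ for $l \leq m$ is the tuple $(b_{l, j})_{j:\, l \in S_j}$, and each sensitive parity $X_{m+j} = \bigoplus_{i \in S_j} b_{i,j}$ is a GF(2)-linear function of the bits. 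Hence every conditioning event of interest corresponds to a system of GF(2)-linear equations on the bits, and the resulting conditional distributions are uniform over affine subspaces that are amenable to exact analysis.

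Concretely, let $R_{i-1} := \{o_j : j < i,\; y_{o_j} \neq \ast\}$ denote the already-released positions, which by the inductive hypothesis is a function of $\pi$ alone. I will prove the following combinatorial characterization: at step $i$, the mechanism erases $Y_{o_i}$ with probability $1$ if and only if either (a) $o_i \in \cK$, or (b) there exists $j \in [k]$ such that $o_i \in S_j$ and $S_j \setminus \{o_i\} \subseteq R_{i-1}$; otherwise it releases $Y_{o_i} = X_{o_i}$ with probability $1$. To verify this, I first show that conditioning on $y_{o_{[i-1]}}$ reduces to conditioning only on the released values $X_{R_{i-1}} = y_{R_{i-1}}$, because by induction the $\ast$ outcomes are logically forced by $\pi$ and add no further constraint on the bits. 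A short GF(2) calculation under this reduced conditioning then yields the two cases: in the release case, each bit $b_{o_i, j}$ comprising $X_{o_i}$ is uniform and jointly independent of $X_\cK$ (since for every $j$ with $o_i \in S_j$ at least one other bit in $S_j$ is still free, and the parity constraints across different $j$'s act on disjoint bit sets), so $\pb{x_{o_i} | X_\cK = u, y_{o_{[i-1]}}}$ is a positive constant independent of $u$ and the ratio in~\eqref{eq:encode-modified} equals $1$; in the erase case, the parity for the completing $j$ forces $b_{o_i, j}$ to depend affinely on $u_j$, so a suitable choice of $u$ makes $\pb{x_{o_i} | X_\cK = u, y_{o_{[i-1]}}} = 0$, whence $\min_u = 0$ and the ratio equals $0$.

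The main obstacle is the reduction that strips the $\ast$'s from the conditioning: \emph{a priori}, the event $\{Y_{o_j} = \ast\}$ could carry non-trivial information about the bits and break the uniform-over-affine-subspace structure needed by the linear-algebra step. However, by the inductive hypothesis the erasure decision is a deterministic function of $\pi$ independent of the bit realization, so the corresponding factor $\wb{y_{o_j} = \ast | x_{o_j}, x_\cK, y_{o_{[j-1]}}}$ is identically $1$ on the set of realizations consistent with $\pi$ and contributes no posterior restriction beyond what the released bits already impose. Making this precise requires careful inductive bookkeeping of the joint distribution $\pb{\bx, \by} = \pb{\bx} \prod_i \wb{y_{o_i} | x_{o_i}, x_\cK, y_{o_{[i-1]}}}$ to avoid circularity between the inductive hypothesis and the reduction step. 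Once this reduction is secured, the remainder is the mechanical GF(2) computation sketched above, and Lemma~\ref{lemma:deterministic} follows at once; this deterministic reduction is precisely what enables the subsequent polynomial-time reduction from the hitting-set problem in the proof of Theorem~\ref{theorem:np}.
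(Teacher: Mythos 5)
Your proposal is correct and follows essentially the same route as the paper's Appendix proof: induction on the ordering, use of the inductive determinism to replace conditioning on $y_{o_{[i-1]}}$ by conditioning on the released positions (the paper's $\tilde{\cE_i}$, your $R_{i-1}$), and the same dichotomy — erase with probability $1$ when $o_i\in\cK$ or some $S_j$ is completed by $o_i$ together with released positions (so $\min_u$ is $0$), and release with probability $1$ otherwise via the parity/independence argument with a remaining free bit in each $S_j$. Your characterization of the erasure pattern as a function of $\pi$ alone and your care about the reduction step are consistent with, and slightly more explicit than, the paper's treatment.
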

\begin{proof}
See Appendix~\ref{appendix:lemma-deterministic}.
\end{proof}

As a result of Lemma~\ref{lemma:deterministic}, the overall erasure rate of the privacy mechanism can be calculated simply by counting the number of erased positions.
Note that, if $o_i \in \cK$, then
\begin{equation}
\wb{y_{o_i}=\ast|x_{o_i},x_{\cK},y_{o_{[i-1]}}}
    = 1,
\end{equation}
regardless of the ordering as we have previously shown.
Thus, we need to compare only the erased indices in $[m]=[m+k]\backslash \cK$ for finding the best ordering. 

Let $E_{\pi}$ be the set of erased indices in $[m]$ for a given ordering $\pi=(o_1,\ldots,o_n)$, \Ie 
\begin{equation}
	E_{\pi} = \left\{i: y_i = \ast, i \in [m] \right\},
\end{equation} 
where the distribution over $\bY$ is determined by the privacy mechanism.
Then, finding the best ordering corresponds to finding $\pi$ that leads to the minimum cardinality $e^\ast$ of the corresponding $E_\pi$:
\begin{equation}
\label{eq:optimal-order}
	e^{\ast} = \min_{\pi} |E_{\pi}|.
\end{equation}

Intuitively, whether a particular index $i\in [m]$ is included in $E_\pi$ can be easily determined based on the bipartite graph representation of the underlying hitting set problem (see Fig.~\ref{fig:NP}) as follows.
The ordering $\pi=(o_1,\ldots,o_{m+k})$ specifies the order in which the $m$ nodes on the left-hand side of the graph, each with a corresponding $X_i$, is visited by the mechanism (disregarding the sensitive indices  $o_i \notin [m]$, which are always erased).
As we show in the proof of Lemma~\ref{lemma:deterministic}, when we visit the node $o_i \in [m]$, $X_{o_i}$ is erased if and only if there exists a node $j\in [k]$ on the right-hand side of the graph that is connected to $o_i$ and only to other nodes (if any) that are previously visited \emph{and} not erased.
The presence of such a node $j$ indicates that the sensitive variable $X_{m+j}$ is directly revealed by $X_{o_i}$ (since the rest of random bits contributing to $X_{m+j}$ are already released in $\bY$ without erasure), while the absence of such $j$ indicates the existence of other positions that are erased or have not been released that fully mask the correlation between $X_{o_i}$ and the sensitive positions.

Finally, we complete the reduction by showing that solving \eqref{eq:optimal-order} also produces a solution for the hitting set problem \eqref{eq:hitting-set}, \Ie $e^{\ast} = h^{\ast}$.
This is achieved by showing both that the set of erased indices $E_{\pi}$ is in fact a valid hitting set ($e^{\ast} \geq h^{\ast}$),  and that there exists an ordering $\pi$ satisfying $|E_{\pi}| \leq |V|$ for any given hitting set $V$ ($e^{\ast} \leq h^{\ast}$).
A detailed proof is included in Appendix~\ref{appendix:hitting-set}.

Since the hitting set problem is equivalent to the set cover problem and is well-known to be NP-hard,
our reduction proves that finding the best ordering $\pi$ for our privacy mechanism given any $\pb{\bx}$ and $\cK$ is also NP-hard. 
We note that this result does not preclude the possibility that for a restricted class of genotype-hiding problems (\Eg with a structured $\pb{\bx}$ defined by HMMs), one could still find an efficient polynomial-time algorithm for determining the optimal variable ordering, which remains an interesting open question.

\section{Robustness}
\label{sec:robustness}
In this section, we discuss the robustness of our mechanism with respect to the underlying data distribution. 
In our formulation of the privacy mechanism, the distribution (or the data generative model) $\pb{\bf{x}}$, from which the input genome sequence originated, is assumed to be known.
In practice, one can only empirically estimate this distribution based on existing data resources, \Eg by obtaining maximum likelihood estimates of the model parameters based on a large collection of reference genomes in public data repositories.
Consequently, the generative model used by the mechanism is bound to have deviations from the true generative process, both in terms of the limitations of the model as well as the noisy estimation of the parameters. 
These discrepancies can potentially lead to privacy leakage if the adversary has access to a more accurate distribution for the underlying input. 
Here, we study the potential privacy leakage under the worst-case scenario, where the adversary has access to the true underlying distribution.
We bound the potential leakage as a function of the distance between the data distribution used by the mechanism and the true underlying distribution, suggesting that our mechanism is robust to small deviations in the noisy data distribution we expect to encounter in real-world use cases.


We denote the noisy data distribution used by the mechanism by $\qb{\bf{x}}$ and the true distribution by $\pb{\bf{x}}$.
The privacy mechanism constructs the sampling distribution $w(\bf{y}|\bf{x})$ based on the available $\qb{\bf{x}}$ such that the output $\bY$ is independent of sensitive genotypes $X_{\cK}$ with respect to the joint distribution $\qb{\bx,\by}$ induced by $q(\bx)$ and the mechanism $w(\bf{y}|\bf{x})$, \Ie
\begin{equation}
\label{eq:mismatch-private}
\qb{x_{\cK},\bf{y}}=
 	\sum_{x_{[n]\backslash\cK}} \qb{\bf{x},\bf{y}}= \sum_{x_{[n]\backslash\cK}} w(\mathbf{y}|\mathbf{x}) \qb{\bf{x}} = \qb{x_{\cK}} \qb{\bf{y}}.
 \end{equation} 
Since $\bX$ is actually generated from $\pb{\bx}$ not $\qb{\bx}$, we also define the true joint distribution $\pb{\bx,\by}$ induced by $\pb{\bx}$ and the mechanism $w(\bf{y}|\bf{x})$; note that the mechanism is still based on $\qb{\bx}$.


Then, we can measure the unforeseen privacy leakage due to the mismatch in data distribution by the mutual information $I(\pb{x_{\cK}};\pb{\bf{y}})$ between the sensitive genotypes and the output sequence with respect to $\pb{\bx,\by}$, as follows:
\begin{align}
	 I(\pb{x_{\cK}};\pb{\bf{y}}) 
	&  = \sum_{x_{\cK},\bf{y}} \pb{x_{\cK},\bf{y}} \log \frac{\pb{x_{\cK},\bf{y}}}{\pb{\bf{y}}\pb{x_{\cK}}} \nonumber \\
	& = \sum_{x_{\cK},\bf{y}} \pb{x_{\cK},\bf{y}} \log \frac{\pb{x_{\cK},\bf{y}}\qb{x_{\cK},\bf{y}}}{\pb{\bf{y}}\pb{x_{\cK}}\qb{x_{\cK},\bf{y}}} \nonumber \\
	&  \utag{a}{=} \sum_{x_{\cK},\bf{y}} \pb{x_{\cK},\bf{y}} \log \frac{\pb{x_{\cK},\bf{y}}\qb{x_{\cK}}\qb{\bf{y}}}{\pb{\bf{y}}\pb{x_{\cK}}\qb{x_{\cK},\bf{y}}} \nonumber \\
	&  = D(\pb{x_{\cK},\bf{y}}||\qb{x_{\cK},\bf{y}}) - D(\pb{x_{\cK}}||\qb{x_{\cK}}) - D(\pb{\bf{y}}||\qb{\bf{y}}), \label{eq:robust-intermediate}
\end{align}
where $D(\cdot||\cdot)$ denotes relative entropy or equivalently Kullback-Leibler (KL) divergence, and \uref{a} follows from \eqref{eq:mismatch-private}.
This leads to the following theorem.
\begin{theorem}
\label{thm:leakage}
$I(\pb{x_{\cK}};\pb{\bf{y}})  \leq D(\pb{\bf{x}}||\qb{\bf{x}})$.
\end{theorem}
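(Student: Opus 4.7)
The plan is to start from the identity already derived in the excerpt, namely
\begin{equation*}
I(\pb{x_{\cK}};\pb{\bf{y}}) = D(\pb{x_{\cK},\bf{y}}\,\|\,\qb{x_{\cK},\bf{y}}) - D(\pb{x_{\cK}}\,\|\,\qb{x_{\cK}}) - D(\pb{\bf{y}}\,\|\,\qb{\bf{y}}),
\end{equation*}
and to bound the right-hand side in two simple steps. First, I would invoke the non-negativity of KL divergence to drop the two subtracted terms, giving
\begin{equation*}
I(\pb{x_{\cK}};\pb{\bf{y}}) \;\leq\; D(\pb{x_{\cK},\bf{y}}\,\|\,\qb{x_{\cK},\bf{y}}).
\end{equation*}

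Second, I would argue that the joint distributions $\pb{x_{\cK},\bf{y}}$ and $\qb{x_{\cK},\bf{y}}$ are obtained from $\pb{\bf{x}}$ and $\qb{\bf{x}}$ respectively by applying the \emph{same} stochastic kernel: given $\bf{x}=(x_{\cK},x_{[n]\setminus\cK})$, the kernel outputs $(x_{\cK},\bf{y})$ where $\bf{y}$ is drawn from $w(\bf{y}\,|\,\bf{x})$, a mechanism that depends only on $\bf{x}$ and not on whether the underlying marginal is $p$ or $q$. Formally, for any target $(x_{\cK}^\circ,\bf{y}^\circ)$,
\begin{equation*}
\pb{x_{\cK}^\circ,\bf{y}^\circ} = \sum_{x_{[n]\setminus\cK}} w(\bf{y}^\circ\,|\,x_{\cK}^\circ,x_{[n]\setminus\cK})\,\pb{x_{\cK}^\circ,x_{[n]\setminus\cK}},
\end{equation*}
with an analogous identity for $q$. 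By the data processing inequality for KL divergence (monotonicity under Markov kernels), I then conclude
\begin{equation*}
D(\pb{x_{\cK},\bf{y}}\,\|\,\qb{x_{\cK},\bf{y}}) \;\leq\; D(\pb{\bf{x}}\,\|\,\qb{\bf{x}}),
\end{equation*}
and chaining the two inequalities yields the claim.

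There isn't really a hard step here: the identity in the excerpt does most of the work, and the remaining ingredients (non-negativity and data processing) are standard. The one point I would be careful to make explicit is that the mechanism $w(\bf{y}\,|\,\bf{x})$ is the \emph{same} object in both $\pb{\bf{x},\bf{y}}$ and $\qb{\bf{x},\bf{y}}$ (the mechanism is designed based on $q$ but is then applied to data actually drawn from $p$), which is precisely what lets the same kernel interpretation license the data processing step. No new notation beyond what the paper already introduces is required.
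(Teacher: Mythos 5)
Your proof is correct and follows essentially the same route as the paper: both start from the identity in \eqref{eq:robust-intermediate}, drop the non-negative subtracted divergences, and exploit that the mechanism $w(\mathbf{y}|\mathbf{x})$ is the \emph{same} conditional under $p$ and $q$. The only cosmetic difference is that you package the final step as the data-processing inequality for KL divergence applied to the kernel $\mathbf{x}\mapsto(x_{\cK},\mathbf{y})$, whereas the paper spells out the same argument explicitly via two applications of the chain rule of relative entropy together with $\pb{\mathbf{y}|\mathbf{x}}=\qb{\mathbf{y}|\mathbf{x}}=w(\mathbf{y}|\mathbf{x})$.
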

\begin{proof}
See Appendix~\ref{appendix:leakage}.
\end{proof}

This result implies that the amount of privacy leakage due to the potential mismatch between the data distribution used by the mechanism and the true underlying generative process gracefully scales with the extent to which the two distributions diverge.


\section{Privacy mechanism for Hidden Markov models}
\label{sec:hmm}

Thus far, we considered the data generative model $\pb{\bx}$ of the privacy mechanism to be an arbitrary distribution.
Here, we address a particular form of $\pb{\bx}$ of great interest for our application setting in genomics, namely the Li and Stephens model~\cite{Li03}, which is based on a hidden Markov model.
This model is widely adopted in genetics for a wide range of tasks that require a probabilistic model of the genome~\cite{song2016li}.
For this class of $\pb{\bx}$, we propose an efficient algorithm to implement the privacy mechanism introduced in Section~\ref{sec:mechanism}.

\subsection{Review of hidden Markov models for genomes}
\label{sec:HMM}
The classical hidden Markov model (HMM) describing the distribution of personal genomes \cite{Li03} is as follows. First, let $\bX=(X_1,X_2,\ldots,X_n)$ represent an individual's (haplotype) genetic sequence of length $n$.
Following standard practice in genetics, we adopt a binary alphabet $\mathcal{X}=\{0,1\}$ for each element $X_i$, representing whether the observed nucleotide is identical to the one in the reference human genome (called \emph{reference allele}) or not (\emph{alternative allele}).   
In addition, we are given a reference dataset of $m$ personal genome sequences $\cH=\{{\bf h}_j: j=1,\ldots,m\}$, where each sequence ${\bf h}_j$ is of length $n$. The $i$-th coordinate of ${\bf h}_j$ is denoted by $h_{i,j}$, which also takes a value in $\mathcal{X}$.

In this model, $\bX$ is viewed as a ``mosaic'' of reference sequences in $\cH$ with potential substitution errors arising from mutations or experimental noise in sequencing. Formally, $\bX$ depends on a sequence of hidden states $\{S_i\}_{i=1}^n$ forming a Markov chain, where each $S_i$ takes an integer in the range $\{1,\ldots,m\}$, representing an index into $\mathcal{H}$. 
Without loss of generality, we assume that the initial state $S_1$ is uniformly distributed over $\{1,\ldots,m\}$.
The transition probability $\pi_{i,j}$ from state $i$ to $j$ is set to $\frac{\epsilon}{m-1}$ and $1-  \epsilon$ for $i \neq j$ and $i = j$, respectively.
The parameter $\epsilon$ is often called the recombination probability; in the following we also use the term \emph{crossover probability} to refer to this quantity.

Next, each $X_i$ is sampled based on the hidden state $S_i$ by copying the corresponding symbol in the selected reference sequence with a small probability of error. In other words, $X_i$ is equal to the symbol in the $i$-th position of $\mathbf{h}_{S_i}$ with \emph{error probability} $\theta$. 
The overall data distribution $\pb{\bx}$ is fully specified by the tuple $(\cH,\epsilon,\theta)$.
We provide a graphical illustration of $\pb{\bx}$ in Fig.~\ref{fig:model}.
In our work, we treat the parameters of the above model as given. In practice, these parameters are estimated from a large collection of reference genomes, \Eg including hundreds of thousands of individuals, which are available in public data repositories such as the UK Biobank~\cite{Sudlow15}. 

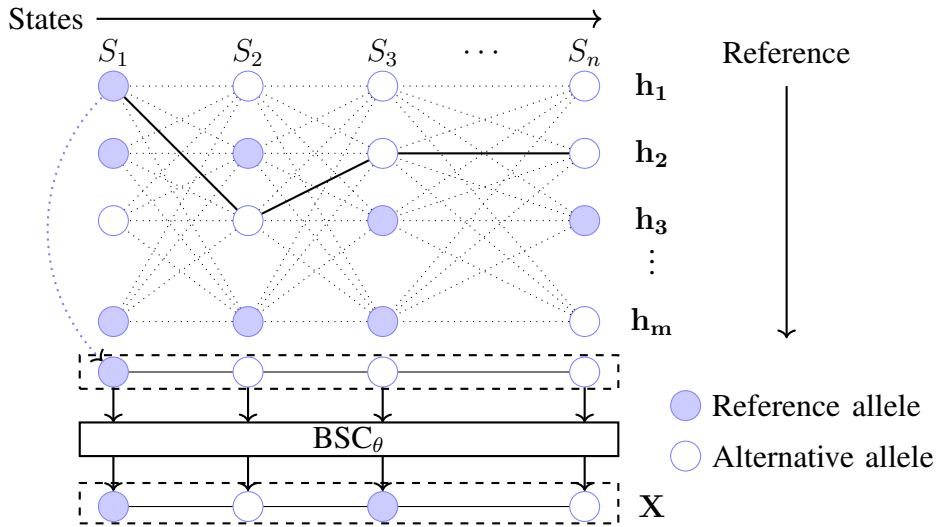
\begin{figure*}[t!]
\begin{center}
\begin{tikzpicture}[scale=0.895]
\tikzstyle{zero}=[shape=circle,draw=blue!50,fill=blue!20]
\tikzstyle{one}=[shape=circle,draw=blue!50]
\tikzstyle{observationzero}=[shape=rectangle,draw=blue!50,fill=blue!20]
\tikzstyle{observationone}=[shape=rectangle,draw=blue!50]
\tikzstyle{lightedge}=[dotted]
\tikzstyle{mainedge}=[thick]
\tikzstyle{channel}=[shape=rectangle,draw=blue!50,fill=blue!20]
\node at (8,5) {$\bf{h}_1$};
\node at (8,4) {$\bf{h}_2$};
\node at (8,3) {$\bf{h}_3$};
\node at (8,2.5) {$\vdots$};
\node at (8,1.5) {$\bf{h}_m$};
\node at (10,5.5) {Reference};
\draw[->,thick] (10,5)--(10,1.25);
\node at (-1,6) {States};
\draw[->,thick] (-0.25,6)--(7.25,6);
\node at (0,5.5) {$S_1$};
\node[zero] (R1_1) at (0,5) {};
\node[zero] (R2_1) at (0,4) {};
\node[one]  (R3_1) at (0,3) {};
\node[zero] (Rm_1) at (0,1.5) {};
\node               at (2,5.5) {$S_2$};
\node[one] (R1_2) at (2,5) {}
    edge[lightedge] (R1_1)
    edge[lightedge] (R2_1)
    edge[lightedge] (R3_1)
    edge[lightedge] (Rm_1);
\node[zero] (R2_2) at (2,4) {}
    edge[lightedge] (R1_1)
    edge[lightedge] (R2_1)
    edge[lightedge] (R3_1)
    edge[lightedge] (Rm_1);
\node[one] (R3_2) at (2,3) {}
    edge[mainedge] (R1_1)
    edge[lightedge] (R2_1)
    edge[lightedge] (R3_1)
    edge[lightedge] (Rm_1);
\node[zero] (Rm_2) at (2,1.5) {}
    edge[lightedge] (R1_1)
    edge[lightedge] (R2_1)
    edge[lightedge] (R3_1)
    edge[lightedge] (Rm_1);
\node               at (4,5.5) {$S_3$};
\node[one] (R1_3) at (4,5) {}
    edge[lightedge]  (R1_2)
    edge[lightedge] (R2_2)
    edge[lightedge] (R3_2)
    edge[lightedge] (Rm_2);
\node[one] (R2_3) at (4,4) {}
    edge[lightedge] (R1_2)
    edge[lightedge] (R2_2)
    edge[mainedge] (R3_2)
    edge[lightedge] (Rm_2);
\node[zero] (R3_3) at (4,3) {}
    edge[lightedge] (R1_2)
    edge[lightedge] (R2_2)
    edge[lightedge] (R3_2)
    edge[lightedge] (Rm_2);
\node[zero] (Rm_3) at (4,1.5) {}
    edge[lightedge] (R1_2)
    edge[lightedge] (R2_2)
    edge[lightedge] (R3_2)
    edge[lightedge] (Rm_2);
\node at (5.5,5.5) {$\cdots$};
\node               at (7,5.5) {$S_n$};
\node[one] (R1_n) at (7,5) {}
    edge[lightedge]  (R1_3)
    edge[lightedge] (R2_3)
    edge[lightedge] (R3_3)
    edge[lightedge] (Rm_3);
\node[one] (R2_n) at (7,4) {}
    edge[lightedge] (R1_3)
    edge[mainedge] (R2_3)
    edge[lightedge] (R3_3)
    edge[lightedge] (Rm_3);
\node[zero] (R3_n) at (7,3) {}
    edge[lightedge] (R1_3)
    edge[lightedge] (R2_3)
    edge[lightedge] (R3_3)
    edge[lightedge] (Rm_3);
\node[one] (Rm_n) at (7,1.5) {}
    edge[lightedge] (R1_3)
    edge[lightedge] (R2_3)
    edge[lightedge] (R3_3)
    edge[lightedge] (Rm_3);
\draw [thick,dashed] (-0.5,1) rectangle (7.5,0.5);

\node[zero] (I1) at (0,0.75) {}
    edge[<-,dotted,bend left=45,thick,draw=blue!50] (R1_1);
\node[one] (I2) at (2,0.75) {};
\node[one] (I3) at (4,0.75) {};
\node[one] (In) at (7,0.75) {};

\draw[->,thick] (I1)--(0,0);
\draw[->,thick] (I2)--(2,0);
\draw[->,thick] (I3)--(4,0);
\draw[->,thick] (In)--(7,0);

\draw [thick] (-0.5,0) rectangle (7.5,-0.5);
\node at (3.5,-0.25) {BSC$_\theta$};

\draw [thick,dashed] (-0.5,-0.9) rectangle (7.5,-1.5);
\node[zero] (O1) at (0,-1.25) {};
\node[one] (O2) at (2,-1.25) {};
\node[zero] (O3) at (4,-1.25) {};
\node[one] (On) at (7,-1.25) {};

\draw[->,thick] (0,-0.5)--(O1);
\draw[->,thick] (2,-0.5)--(O2);
\draw[->,thick] (4,-0.5)--(O3);
\draw[->,thick] (7,-0.5)--(On);
\node[align=center] at (8,-1.25) {$\mathbf{X}$};

\draw (I1)--(I2)--(I3)--(In);
\draw (O1)--(O2)--(O3)--(On);

\node [zero,label=right:Reference allele] at (8.5,0.25) {};
\node [one,label=right:Alternative allele] at (8.5,-0.5) {};

\end{tikzpicture}
\end{center}
\caption{
A graphical illustration of HMM for genomes. 
The state space of the hidden states is $\{1,\ldots,m\}$,
where each element corresponds to an index into the reference dataset $\{\mathbf{h}_1,\ldots,\mathbf{h}_m\}$ (each of length $n$). 
A Markov process $\{S_i\}_{i=1}^n$ indicates which reference sequence the user reads the data from at the $i$-th position. For each $i$, $X_i$ differs from the $i$-th position of ${{\bf h}_{S_i}}$ with probability $\theta$, representing noise in the data. BSC$_\theta$: Binary symmetric channel with crossover probability $\theta$. 
}
\label{fig:model}
\end{figure*}


\subsection{An efficient algorithm for HMMs}
\label{sec:algorithm}
In this section, we propose an efficient algorithm to implement the privacy mechanism introduced in Section~\ref{sec:mechanism} for $\pb{\bx}$ based on a hidden Markov model $(\cH,\epsilon,\theta)$ described in the previous section.
The outline of our algorithm is provided in Algorithm~1.

As seen in \eqref{eq:encode}, the privacy mechanism determines the probability of erasing $x_{i}$ mainly based on the probability $\pb{x_{i}|x_{\cK},y_{[i-1]}}$.
By employing a belief propagation approach akin to the well-known forward-backward algorithm~\cite{forward_backward}, we track the computation of $\pb{x_{i}|x_{\cK}=u,y_{[i-1]}}$ for all $u \in \cX^{|\cK|}$ efficiently.  The novelty of our algorithm is that it incorporates the stochasticity of the privacy mechanism in addition to that of the HMM.

First, note that it is sufficient to describe how to compute $\pb{x_i|x_{\cK}=u,y_{[i-1]}}$ for all $u \in \cX^{|\cK|}$ and $i\in [n]$, which fully determines the distribution of $y_1,\dots,y_n$ specified by our privacy mechanism, \Ie
\begin{equation}
\label{eq:track-3}
\pb{y_{i}|x_{i},x_{\cK},y_{[i-1]}} 
 = 
\begin{cases}
    \frac{\min_{u \in \cX^{|\cK|}} \pb{x_{i}|x_{\cK}=u,y_{[i-1]}}}{\pb{x_{i}|x_{\cK},y_{[i-1]}}}, & \text{if }y_{i}=x_{i},\\
    1 - \frac{\min_{u \in \cX^{|\cK|}} \pb{x_{i}|x_{\cK}=u,y_{[i-1]}}}{\pb{x_i|x_{\cK},y_{[i-1]}}}, & \text{if }y_{i}=\ast,\\
    0, & \text{otherwise.}
\end{cases}
\end{equation}


We begin by expressing $\pb{x_{i}|x_{\cK}=u,y_{[i-1]}}$ as
\begin{align}
  \pb{x_{i}|x_{\cK}=u,y_{[i-1]}} 
  & = \sum_{s_{i}} \pb{s_{i}|x_{\cK}=u,y_{[i-1]}} \pb{x_{i}|s_{i},x_{\cK}=u,y_{[i-1]}} \nonumber \\
  & = \sum_{s_{i},s_{{i-1}}}\pb{s_{{i-1}}|x_{\cK}=u,y_{[i-1]}} \pb{{s_i}|s_{{i-1}},x_{\cK}=u} \pb{x_{i}|s_{i}}. \label{eq:track-0}
\end{align}
Note that
\begin{align}
    \pb{{s_i}|s_{{i-1}},x_{\cK}=u}
    &  = \frac{\pb{{s_i},s_{{i-1}},x_{\cK}=u}}{\pb{s_{{i-1}},x_{\cK}=u}} \nonumber\\
    &  = \frac{\pb{s_{{i-1}}|x_{\cK_{i-}}=u_{-}}\pb{s_{{i}}|s_{i-1}}\pb{x_{\cK_{i+}}=u_{+}|s_{i}}}{\pb{s_{{i-1}}|x_{\cK_{i-}}=u_{-}}\pb{x_{\cK_{i+}}=u_{+}|s_{i-1}}} \nonumber \\
    &  = \frac{\pb{s_{{i}}|s_{i-1}}\pb{x_{\cK_{i+}}=u_{+}|s_{i}}}{\pb{x_{\cK_{i+}}=u_{+}|s_{i-1}}} \nonumber\\
     &  = \frac{\pb{s_{{i}}|s_{i-1}}\pb{x_{\cK_{i+}}=u_{+}|s_{i}}}{\sum_{s_i} \pb{s_i|s_{i-1}}\pb{x_{\cK_{i+}}=u_{+}|s_{i}}}, \label{eq:intermediate}
\end{align}
where $\cK_{i-}:=\cK \cap \{1,\ldots,i-1\}$, $\cK_{i+}:=\cK \cap \{i,\ldots,n\}$, $u_{-}$ and $u_{+}$ are corresponding values of $x_{\cK_{i-}}$ and $x_{\cK_{i+}}$ specified by $u$.

As $\pb{x_i|s_i}$ and $\pb{s_i|s_{i-1}}$ are directly given by the HMM, we need only to consider how to compute the two terms $\pb{s_{{i-1}}|x_{\cK}=u,y_{[i-1]}}$ and $\pb{x_{\cK_{i+}}=u_{+}|s_{i}}$.
To simplify our notation, we introduce the following variables to represent these terms:
\begin{align*}
    \psi^{(i)}(u,s_i) &:= \pb{s_i|x_{\cK}=u,y_1,\ldots,y_i},\\
    \gamma^{(i)}(u,s_i) &:= \pb{x_{\cK_{i+}}=u_{+}|s_i}.
\end{align*}
With $\psi^{(i)}(u,s_{i})$ and $\gamma^{(i)}(u,s_i)$ for a given position $i$, 
we can calculate \eqref{eq:track-0} as
\begin{equation}
\label{eq:track-1}
 \pb{x_i|x_{\cK}=u,y_{[i-1]}}  = 
 \sum_{s_{{i-1}}}    \frac{\sum_{s_{i}}\psi^{(i-1)}(u,s_{i-1}) \pb{s_{{i}}|s_{i-1}}\gamma^{(i)}(u,s_i)\pb{x_{i}|s_{i}}}{\sum_{s_i} \pb{s_i|s_{i-1}}\gamma^{(i)}(u,s_i)} .
\end{equation}

First, note that $\gamma^{(i)}(u,s_i)$ can be recursively computed in the same manner as calculating the backward probabilities in the forward-backward algorithm, as described below:

\textbf{Initialization:}
We initialize $\gamma^{(n)}(u,s_n)$ by
\begin{equation}
\label{eq:gamma-initial}
    \gamma^{(n)}(u,s_n) =
    \begin{cases}
        \pb{x_n=u_n|s_n}, & n \in \cK,\\
        1, & n \notin \cK.
    \end{cases}
\end{equation}

\textbf{Iterations:}
For $i=n-1,\ldots,1$, we compute $\gamma^{(i)}(u,s_i)$  as 
\begin{small}
\begin{equation}
\label{eq:gamma-interation}
    \gamma^{(i)}(u,s_i) 
   = 
    \begin{cases}
        \sum_{s_{i+1}} \pb{x_i=u_i|s_i} \pb{s_{i+1}|s_i}  \gamma^{(i+1)}(u,s_{i+1}), & i \in \cK \\
        \sum_{s_{i+1}} \pb{s_{i+1}|s_i} \gamma^{(i+1)}(u,s_{i+1}), & i \notin \cK.
    \end{cases}
\end{equation}
\end{small}

Next, to efficiently compute  $\psi^{(i)}(u,s_i)$ for $i \in [n]$, we analogously adopt the following iterative steps.

\textbf{Initialization:}
$\psi^{(1)}(u,s_1)$ is initialized by
\begin{equation}
\label{eq:psi-initial}
    \psi^{(1)}(u,s_1) \propto  \pb{s_1|x_{\cK}=u}\pb{y_1|s_1, x_{\cK}=u},
\end{equation}
where $\pb{s_1|x_{\cK}=u}$ can be calculated by \eqref{eq:intermediate} given $\gamma^{(1)}(u,s_1)$, and $\pb{y_1|s_1, x_{\cK}=u}$ is given by our mechanism as shown in \eqref{eq:track-3}.

\textbf{Iterations:}
Using Bayes' rule, we can express $\psi^{(i)}(u,s_i)$ as  
\begin{equation}
\label{eq:psi-interation}
     \psi^{(i)}(u,s_i)   = \pb{s_i|x_{\cK}=u,y_{[i]}} 
     \propto \pb{s_i|x_{\cK}=u,y_{[i-1]}} \pb{y_i|s_i, x_{\cK}=u,y_{[i-1]}}, 
\end{equation}
where
\begin{align}
   \pb{s_i|x_{\cK}=u,y_{[i-1]}}  = \sum_{s_{i-1}} \psi^{(i-1)}(u,s_{i-1})\pb{s_i|s_{i-1},x_{\cK}=u},  
\end{align}
and
\begin{align}
     \pb{y_i|s_i, x_{\cK}=u,y_{[i-1]}}  = \sum_{x_i} \pb{x_i|s_i}  \pb{y_i|x_i,x_{\cK}=u,y_{[i-1]}}.
\end{align}

Therefore, $\psi^{(i)}(u,s_i)$ can be computed based on $\psi^{(i-1)}(u,s_{i-1})$. We note that the probability  $\pb{s_i|s_{i-1},x_{\cK}=u}$ can be calculated using $\gamma^{(i)}(u,s_i)$ as shown in \eqref{eq:intermediate}, 
and $\pb{y_i|x_i,x_{\cK}=u,y_{[i-1]}}$ is given by our mechanism as shown in~\eqref{eq:track-3}.
Using this recurrence relation, $\psi^{(i)}(u,s_i)$ for all $i\in [n]$ can be computed.

Analogous to the forward-backward algorithm, our algorithm has polynomial computational complexity of $\cO(nm^{2})$ for a fixed $u$, 
with respect to the sequence length $n$ and the number of reference sequences $m$, for a given $u$.
Clearly, $\min_{u \in \cX^{|\cK|}} \pb{x_{i}|x_{\cK}=u,y_{[i-1]}}$ can be easily obtained once $\pb{x_i|x_{\cK}=u,y_{[i-1]}}$ for all $u$ have been computed.
This overhead involves a factor of $2^{|\cK|}$ in the computational complexity, 
but we expect $|\cK|$ to be a small constant in practice (\Eg less than 10); 
since genotype correlation is predominantly local, the user may apply our mechanism to local regions of the genome of a permissive length, each of which including only a few sensitive positions.

\begin{algorithm}
\begin{algorithmic}[1]
\REQUIRE Genome sequence $\mathbf{X}=(X_1,\ldots,X_n)$ from an HMM with parameters $(\cH,\epsilon,\theta)$, and indices of sensitive positions $\cK \subset [n]$
\ENSURE  Masked genome sequence $\mathbf{Y}=(Y_1,\ldots,Y_n)$, such that $I(X_{\cK};\mathbf{Y})=0$ and $Y_i\in\{X_i,\ast\}$ for all $i\in [n]$

\STATE Initialize $\gamma^{(n)}(u,s_n)$ according to \eqref{eq:gamma-initial}
\FOR{$i=n-1,\ldots,1$} 
    \FOR{$u \in \cX^{|\cK|}$}

        \STATE Compute $\gamma^{(i)}(u,s_i)$ 
        according to \eqref{eq:gamma-interation}
        
        \STATE Compute $\pb{s_i|s_{i-1},x_{\cK}=u}$ 
        according to \eqref{eq:intermediate}
    \ENDFOR
\ENDFOR 

\STATE Initialize $\psi^{(1)}(u,s_1)$ according to \eqref{eq:gamma-initial}

\FOR{$i=2,\ldots,n$} 
  
    \STATE Calculate the erasure probability for $Y_i$ using~\eqref{eq:track-3}
    
      \STATE Generate $Y_i\in \{X_i,\ast\}$ according to the erasure probability


    \FOR{$u \in \cX^{|\cK|}$} 
      
        \STATE Compute $\psi^{(i)}(u,s_i)$ according to \eqref{eq:psi-interation}
    \ENDFOR
    \ENDFOR 
\end{algorithmic}
\caption{Mechanism for hiding  sensitive genotypes in $\bX$}
\label{alg:algorithm}
\end{algorithm}

\section{Simulations}
\label{sec:simulation}
In this section, we provide insights into the empirical performance of our privacy mechanism for hidden Markov models (HMMs) on simulated datasets.
We randomly generated 100 haplotype sequences of length 100, which together with the choices of error probability $\theta$ and crossover probability $\epsilon$ induce $\pb{\bx}$, as described in Section~\ref{sec:HMM}.
For simplicity, we suppose the sensitive position $\mathcal{K}=\{1\}$.

We first illustrate the privacy-utility trade-off of the heuristic window-based erasure approach described in the Introduction.
In particular, this approach erases the first $\omega$ positions of the sequence to hide information about the sensitive position (the first position).
The results are shown in Figure~\ref{fig:window}.
The erasure rate is defined by the size of the erased window over the sequence length, \Ie $\omega/n$ (note $n=100$).
The privacy leakage is measured by the mutual information between the released positions and the sensitive position $X_1$, normalized by the entropy of $X_1$, \Ie $I(X_1;X_{[n]\setminus [\omega]})/H(X_1)$.
We also show the expected erasure rate of our proposed privacy mechanism for comparison, whose privacy leakage is strictly zero by design. We observe that the window-erasure approach requires a high erasure rate (around 0.3) to keep the privacy leakage close to zero, whereas our mechanism achieves a considerably smaller erasure rate (around 0.12) while providing perfect privacy.
On the other hand, choosing a window size for the baseline approach to match the erasure rate of our mechanism leads to a considerable privacy leakage.
%

\begin{figure}[htbp]
\centering
\includegraphics[width=0.75\textwidth]
{./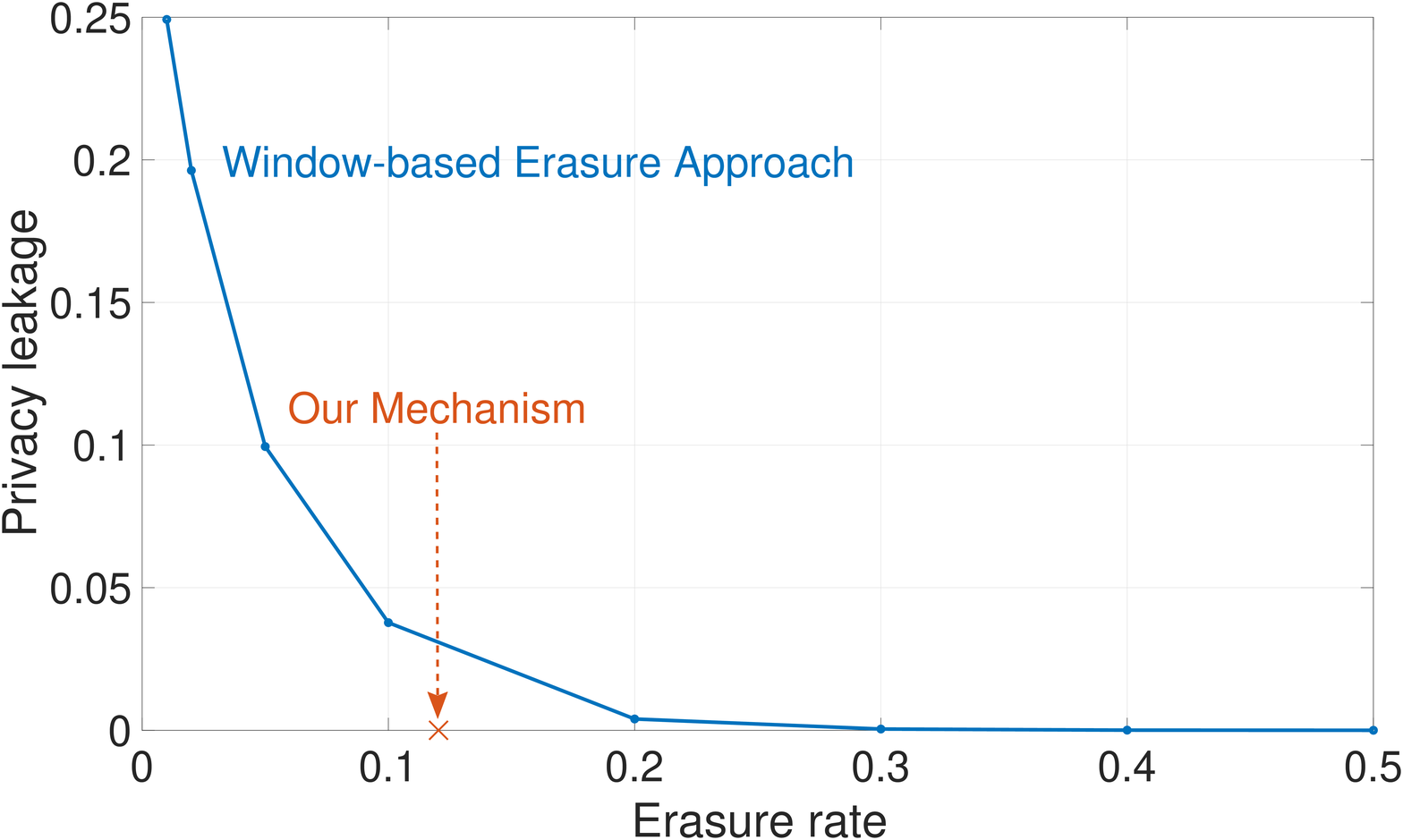}
\caption{
Privacy-utility trade-off of the window-based erasure approach on simulated HMM data with $m=100$, $n=100$, $\cK=\{1\}$, crossover probability $\epsilon = 0.1$ and error probability $\theta = 0.01$. Erasure rate  denotes the size of window that is erased normalized by the sequence length $n$. Privacy leakage denotes the mutual information between the released data and the sensitive symbol normalized by the entropy of the sensitive symbol. 
}
\label{fig:window}
\end{figure}

\begin{figure}[htbp]
\centering
\includegraphics[width=0.75\textwidth]
{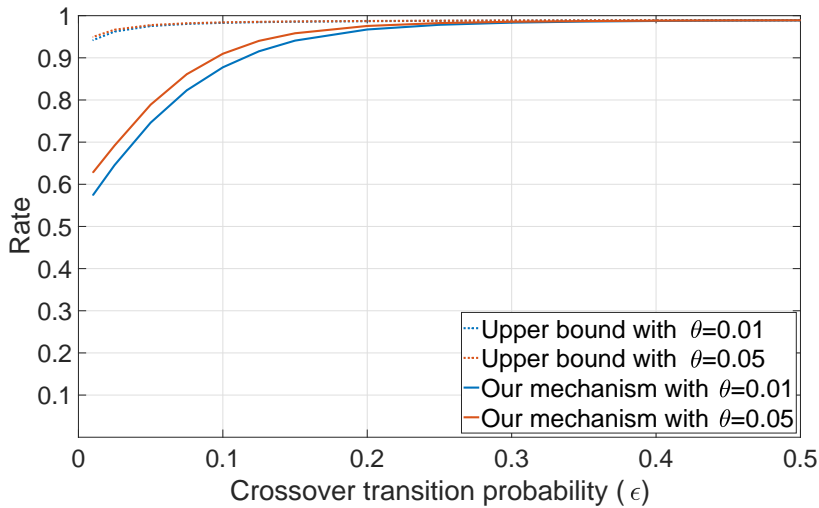}
\caption{
Comparison of our mechanism and the upper bound on simulated HMM data with $m=100$, $n=100$, $\cK=\{1\}$ and different choices of crossover probability $\epsilon$ and error probability $\theta$. 
}
\label{fig:simulation}
\end{figure}

We next evaluate our privacy mechanism over a range of different parameter settings. 
We consider $\theta \in \{0.01, 0.05\}$ and vary $\epsilon$ from 0.01 to 0.5, both of which reflect reasonable ranges of the parameters for the scale of the dataset we simulated.
We provide each instance of $\pb{\bx}$ to our privacy mechanism with $\mathcal{K}=\{1\}$ to calculate its achievable rate $R$ (\Ie one minus the expected erasure rate).
Figure~\ref{fig:simulation} shows the comparison between the rate of our mechanism and the upper bound we derived in Section~\ref{sec:bound}.
The results suggest that the performance of our mechanism shows varying degrees of closeness to the theoretical upper bound depending on the characteristics of the underlying data distribution.
In particular, for higher values of $\epsilon$, representing the regime where the hidden Markov model mixes faster and thus the correlation with the sensitive position decays more quickly, the rate of our mechanism is nearly identical to the upper bound.
On the other hand, for lower values of $\epsilon$, which lead to stronger correlations in the sequence, we observed that the gap between our mechanism and the upper bound can grow considerable large.
Note that this does not necessarily imply that our mechanism achieves a significantly suboptimal performance, given that the upper bound we considered is not tight in general.
We also note that the rate of our mechanism is generally higher when the error probability is larger ($\theta=0.05$ vs $0.01$), which agrees with the intuition that higher levels of noise in the data distribution lower the requirement for hiding sensitive information, thus leading to lower erasure probabilities and higher rates as a result.

\begin{figure}[htbp]
\centering
\includegraphics[width=0.75\textwidth]{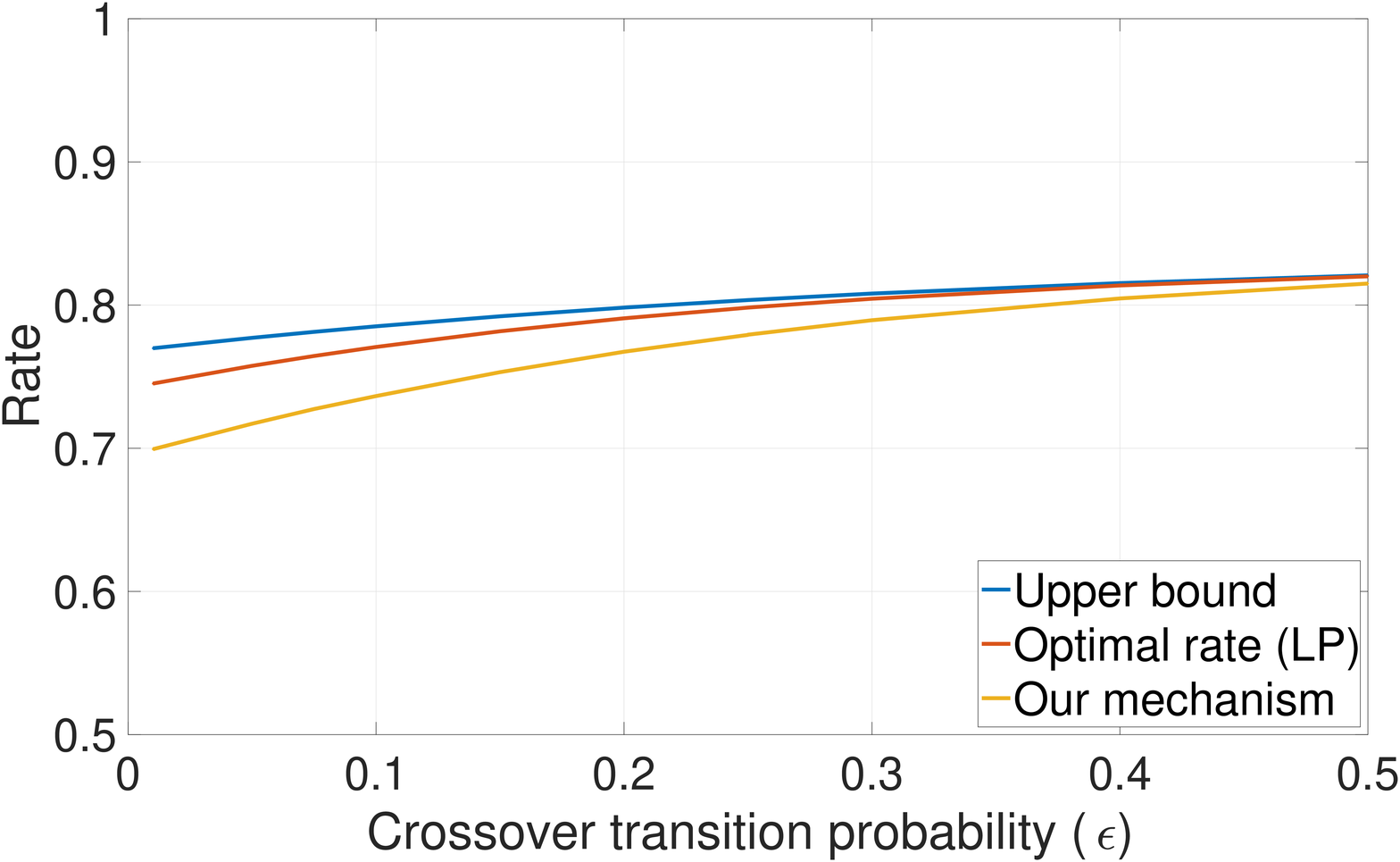}
\caption{
Comparison of our mechanism, the upper bound and the optimal rate based on a linear programming (LP) solution on simulated HMM data with $m=100$, $n=6$, $\cK=\{1\}$ based on a truncated version of the dataset used in Fig.~\ref{fig:simulation}. 
} 
\label{fig:LP}
\end{figure}



To gain further insights into the noticeable gap between the upper bound and our mechanism in the small $\epsilon$ regime, we additionally implemented a linear programming (LP) approach for directly obtaining the optimal mechanism $\wb{\by|\bx}$.
However, since the size of LP grows exponentially with the length of the sequence $n$, we could only evaluate this approach for small problem instances due to numerical instability.
We took the same simulated data as before and truncated each reference haplotype down to the first six positions to obtain a tractable LP instance for this experiment ($n=6$).

The rate comparisons of our privacy mechanism, LP-based optimal mechanism, and the upper bound in this setting are shown in  Fig.~\ref{fig:LP}.
As expected, we observed that the optimal rate lies between the upper bound and the rate of our mechanism, demonstrating that the gap between the optimal rate and the rate of our mechanism is indeed smaller than the ostensible gap suggested by the upper bound.

Taken together, these results suggest that, although the performance of our mechanism is often quite close to the upper bound, the difference between the maximum achievable rate and the rate of our mechanism can vary based on the properties of the data distribution.
We note that it is yet unknown whether there exists a privacy mechanism that can be as efficiently constructed as our mechanism while achieving performance that is closer to the optimal rate.
Closing this performance gap both by devising enhanced privacy mechanisms that achieve higher rates and by developing tighter upper bounds are important directions for future work.



\section{Conclusion and Future Work}
\label{sec:conclusion}

In this paper, we introduced the genotype hiding problem and proposed an information-theoretic privacy mechanism as a solution.
We analyzed the theoretical properties of the mechanism,
and proposed an efficient algorithmic implementation of the mechanism for 
hidden Markov models, a main model of interest for our application in genomics.

It is worth noting that our mechanism does not rule out the possibilities of genotype reconstruction attacks that leverage (i) alternative genetic sequence models and imputation strategies or (ii) a larger set of reference dataset using which HMM parameters could be more accurately estimated. However, our model based on HMMs is consistent with the state-of-the-art techniques for genotype imputation, which is a relatively mature field. In addition, given the high cost of amassing large-scale genomic data, it would be a significant challenge for an attacker to gain access to a larger dataset than those in the public realm. As such, our mechanism could be thought of as providing privacy protection according to the best knowledge of the field.
Our results in Section~\ref{sec:robustness} show that any unforeseen privacy leakage arising from the discrepancies in the data distribution scales gracefully with the relative entropy between the true distribution and the one used by the mechanism.

There are several key directions for future work.
Our work focused on hiding the content of the sensitive positions, yet a potential concern remains regarding information revealed by the choice of sensitive positions $\cK$.
Any approach relying on erasures for privacy protection may inevitably leak information about $\cK$, since preventing such leakage would generally require erasures to be consistently applied throughout the sequence, which is highly costly in terms of utility if only a small fraction is considered sensitive.
An interesting extension of our work is then to relax the faithfulness condition when hiding the positions is deemed important. A promising approach is to re-sample the erased positions from the data distribution as a post-processing step to the mechanism presented in this paper.
That said, we note that in our application setting, $\cK$ is neither necessarily or nor solely decided by the sequence, as it may be determined based on family history of diseases or curated disease associations in public repositories.
Thus, we believe the mechanism presented in this work is directly applicable in many practical scenarios.

Next, although we focused on achieving perfect privacy (with respect to the given data distribution),
it may be useful in practice to consider 
a relaxed notion 
such as local differential privacy~\cite{differential}. 
This may give the user the ability to determine a more desirable trade-off between the level of privacy and the amount of data to be erased.
From an analytical standpoint, this direction would also lead to useful insights about the achievable points along the privacy-utility trade-off curve defined by the genotype-hiding problem with a relaxed notion of privacy, to complement the results in this work.

Furthermore, it would be interesting to 
explore the generalization of our efficient implementation strategies to a broader class of data generative models beyond HMMs,
which may allow similar mechanisms to be employed to protect sensitive data in other domains.

Lastly, we plan to study the performance of our privacy mechanism on real genetic datasets and release the software implementation of our mechanism for the genetics community in the near future.

Growing threats to genetic privacy are necessitating principled strategies for protecting the privacy of individuals while maintaining the utility of data sharing.
Our work illustrates how such a strategy could be designed from an information-theoretic perspective to enable selective disclosure of personal genomic data.
Our methodology is broadly applicable to other data sharing scenarios involving sensitive data with complex correlation structure.
We hope that our work will help spur the development of a wide range of information-theoretic tools for modelling and preserving private genomic information.

\appendices
\section{Proof of Corollary~\ref{corollary}}
\label{appendix:corollary}

We prove the sufficient condition of the optimality holds for the Markov chain case. We give an inductive proof for the sufficient condition by showing that, for a given $x_i$,
\begin{equation}
  u^{\ast} \in \arg\min_{u} \pb{x_i|x_{\cK}=u,y_{[j-1]}}  
\end{equation}
implies
\begin{equation}
    u^{\ast} \in \arg\min_{u} \pb{x_i|x_{\cK}=u,y_{[j]}}
\end{equation}
for $j=1,\ldots,i-1$.
For each $j$, we consider the following two cases ($y_j \neq \ast$ and $y_j = \ast$):
\begin{enumerate}[label=(\arabic*)]
\item If $y_j \neq \ast$, then we have 
	\begin{align}
	\pb{x_i|x_{\cK},y_{[j]}} 
		& = \sum_{x_j} \pb{x_j|x_{\cK},y_{[j]}} \pb{x_i|x_{\cK},y_{[j]},x_j} \nonumber \\
		& \utag{a}{=}  \mathbbm{1}\{x_j=y_j\}  \pb{x_i|x_{\cK},y_{[j]},x_j} \nonumber \\
		& \utag{b}{=} \mathbbm{1}\{x_j=y_j\}  \pb{x_i|x_j},
	\end{align}
	where \uref{a} follows because $Y_j$ can either be $X_j$ or $\ast$, and \uref{b} follows from Markovity. In this case, $\arg\min_{u} \pb{x_i|x_{\cK}=u,y_{[j]}}$ is indeed independent of $u$, which means 
	\begin{equation}
	\arg\min_{u} \pb{x_i|x_{\cK}=u,y_{[j]}} = |\cX|,    
	\end{equation}
	 so the statement is trivially true. 

\item If $y_j = \ast$, then we have 	
		\begin{align}
	 \pb{x_i|x_{\cK},y_{[j]}} 
		& = \sum_{x_j} \pb{x_j|x_{\cK},y_{[j]}} \pb{x_i|x_{\cK},y_{[j]},x_j} \nonumber \\
		& \utag{a}{=} \sum_{x_j} \pb{x_j|x_{\cK},y_{[j]}} \pb{x_i|x_j} \nonumber \\
		& \utag{b}{\propto} \sum_{x_j} \left\{ 
			\pb{x_j|x_{\cK},y_{[j-1]}} - \min_{u} \pb{x_j|x_{\cK}=u,y_{[j-1]}}\right\} \pb{x_i|x_j} \nonumber \\
		& = \sum_{x_j} 
			\pb{x_i|x_j} \pb{x_j|x_{\cK},y_{[j-1]}}  - \sum_{x_j} \pb{x_i|x_j}\min_{u} \pb{x_j|x_{\cK}=u,y_{[j-1]}} \nonumber \\
			& =  \pb{x_i|x_{\cK},y_{[j-1]}}  - \sum_{x_j} \pb{x_i|x_j}\min_{u} \pb{x_j|x_{\cK}=u,y_{[j-1]}},	\label{eq:exp-markov-1}
	\end{align}
	where \uref{a} follows from Markovity, and \uref{b} follows from Bayes's rule and our privacy mechanism \eqref{eq:encode}. Since the second term of the right-hand side in \eqref{eq:exp-markov-1} is independent of $x_{\cK}$, we obtain
	\begin{equation}
	 	\arg\min_{u} \pb{x_i|x_{\cK}=u,y_{[j]}} 
	 	 = \arg\min_{u} \pb{x_i|x_{\cK}=u,y_{[j-1]}}.
	 \end{equation}
\end{enumerate}

For both cases, we have verified that the sufficient condition holds, which completes the proof.

\section{Proof of Lemma~\ref{lemma:deterministic}}
\label{appendix:lemma-deterministic}
We will prove \eqref{eq:NP-determine} by induction. First, consider the base case:
\begin{equation}
\wb{y_{o_1}=\ast|x_{o_1},x_{\cK}} = 1 - \frac{\min_{x_{\cK}} \pb{x_{o_1}|x_{\cK}}}{\pb{x_{o_1}|x_{\cK}}}.
\end{equation} 
From the previous discussion, we know that if $o_i \in \cK$, then 
\begin{equation}
	\wb{y_{o_i}=\ast|x_{o_i},x_{\cK},y_{o_{[i-1]}}} = 1,
\end{equation}
so without loss of generality, we assume that 
\begin{equation}
    o_1 \notin \cK = \{m+1,\ldots,m+k\}.
\end{equation}
Since
\begin{equation}
    x_{\cK} = \left\{\sum_{i: i \in S_j} b_{i,j}:j \in [k] \right\},
\end{equation}
and 
\begin{equation}
    x_{o_1} = \left\{b_{o_1,j}: o_1 \in S_j \right\}
\end{equation}
by definition,  we can see that if there exists some $j$ such that $S_j=\{o_1\}$, then $b_{o_1,j} \in x_{\cK}$ and $b_{o_1,j} \in x_{o_1}$. In this case, we can always find some assignments such that
\begin{equation}
	\min_{x_{\cK}} \pb{x_{o_1}|x_{\cK}} = 0,
\end{equation}
implying that 
\begin{equation}
\wb{y_{o_1}=\ast|x_{o_1},x_{\cK}} = 1.
\end{equation}
If there is no $j$ such that $S_j=\{o_i\}$, each $\sum_{i: i \in S_j} b_{i,j}$ constituting $x_{\cK}$ is a binary summation of some $b_{o_1,j}$ and (independent) random bits $b_{i,j}$ such that $i \neq o_1$, where the latter render the result uniformly random.
This means that $X_{\cK}$ is independent of $X_{o_1}$, and thus we have 
\begin{equation}
\wb{y_{o_1}=\ast|x_{o_1},x_{\cK}}  = 1 - \frac{\min_{x_{\cK}} \pb{x_{o_1}|x_{\cK}}}{\pb{x_{o_1}|x_{\cK}}} \\
 = 1 - \frac{\min_{x_{\cK}} \pb{x_{o_1}}}{\pb{x_{o_1}}} = 0,
\end{equation} 
for all $x_{o_1}$ and $x_{\cK}$.

Assume the statement is true for $o_1,\ldots,o_{i-1}$. Then for $o_i$, note that
\begin{equation}
\label{eq:NP-ep1}
\begin{aligned}
& \pb{x_{o_i}|x_{\cK},y_{o_{[i-1]}}}  = \frac{
\pb{x_{o_i}|x_{\cK}}\pb{y_{o_{[i-1]}}|x_{o_i},x_{\cK}}}{\pb{y_{o_{[i-1]}}|x_{\cK}}}. 
\end{aligned}
\end{equation} 
By letting 
\begin{equation}
\label{eq:preceding-non-erasure}
	\tilde{\cE_i} = \left\{o_j: y_{o_j} \neq \ast, j \leq i-1\right\},
\end{equation}
\eqref{eq:NP-ep1} can be written as
\begin{equation}
\pb{x_{o_i}|x_{\cK},y_{o_{[i-1]}}} = \frac{
\pb{x_{o_i}|x_{\cK}}\pb{x_{\tilde{\cE_i}}|x_{o_i},x_{\cK}}}{\pb{x_{\tilde{\cE_i}}|x_{\cK}}} 
 = \pb{x_{o_i}|x_{\tilde{\cE_i}},x_{\cK}},
\end{equation} 
because of the inductive assumption that the decisions whether to erase $y_{o_1},\ldots,y_{o_{i-1}}$ are deterministic.

Hence, we have
\begin{align}
 \wb{y_{o_i}=\ast|x_{o_i},x_{\cK},y_{o_{[i-1]}}} 
& = 1 - \frac{\min_{x_{\cK}} \pb{x_{o_i}|x_{\cK},y_{o_{[i-1]}}}}{\pb{x_{o_i}|x_{\cK},y_{o_{[i-1]}}}} \nonumber\\
& = 1 - \frac{\min_{x_{\cK}} \pb{x_{o_i}|x_{\tilde{\cE_i}},x_{\cK}}}{\pb{x_{o_i}|x_{\tilde{\cE_i}},x_{\cK}}}.
\end{align}
Analogous to our argument for the base case, if there exists some $j$ such that $S_j \subseteq \tilde{\cE_i} \cup \{o_i\}$, then one can determine $b_{o_i,j} \in x_{o_1}$ from $x_{\tilde{\cE_i}}, x_{\cK}$, and thus 
\begin{equation}
	\min_{x_{\cK}} \pb{x_{o_i}|x_{\tilde{\cE_i}},x_{\cK}} = 0,
\end{equation}
implying that 
\begin{equation}
\wb{y_{o_i}=\ast|x_{o_i},x_{\cK},y_{o_{[i-1]}}} = 1.
\end{equation}
If there is no such $j$, each $x_{j}$ for $j \in \cK$ is the binary summation of some $b_{o_i,j} \in x_{o_i}$ and some independent random bits $b_{i',j}$ such that $i' \neq o_i$, which again guarantees that $X_{\cK}$ is independent of $X_{o_i}$ conditioning on $X_{\tilde{\cE_i}}$. Thus, we have 
\begin{equation}
\wb{y_{o_i}=\ast|x_{o_i},x_{\cK},y_{o_{[i-1]}}} = 
1 - \frac{\min_{x_{\cK}} \pb{x_{o_i}|x_{\tilde{\cE_i}}}}{\pb{x_{o_i}|x_{\tilde{\cE_i}}}}  = 0,
\end{equation} 
for all $x_{o_i}$, $x_{\cK}$ and $y_{o_{[i-1]}}$, which completes the inductive proof.


\section{Proof of Theorem~\ref{theorem:np}}
\label{appendix:hitting-set}

First, let us show that $e^{\ast} \geq h^{\ast}$ by showing that $E_{\pi}$ is a hitting set for any order $\pi$, \Ie $E_{\pi} \cap S_j \neq \emptyset$ for all $j \in [k]$. We prove it by contradiction. Suppose that there exists some $S_j$ such that $E_{\pi} \cap S_j = \emptyset$, which implies that $S_j \subseteq [m]\backslash E_{\pi}$ for some $j$. Assume that $S_j = \{i_1,\ldots,i_t\}$, and $i_t$ is the last index visited that specified by the given order $\pi$. Then, when we run our mechanism for $i_t$, since $i_1,\ldots,i_{t-1}$ are all visited and not erased, by recalling the proof of Lemma~\ref{lemma:deterministic}, we know that $\tilde{\cE}_{i_t} \supseteq \{i_1,\ldots,i_{t-1}\}$, so we have $S_j \subseteq \tilde{\cE}_{i_t} \cup \{i_{t}\}$. It means that $y_{i_t}$ is erased or $i_t \in E_{\pi}$, which contradicts with our assumption $E_{\pi} \cap S_j = \emptyset$. 

Next, we show that $e^{\ast} \leq h^{\ast}$ by showing that for any given hitting set $V$, there exists an order $\pi$ such that $|E_{\pi}| \leq |V|$.  Suppose $V$ is a hitting set and $|V|=h$, \Ie $V \cap S_j \neq \emptyset$ for all $j \in [k]$. 
Consider an order $\pi$ such that $o_i \notin V \cup [m+1:m+k]$ for $i \leq m-h$ and $o_i \in V$ for $i \in [m-h+1:m]$, \Ie visiting indices in the complementary of $T$ before attaining $V$.  When we visit $o_i$ such that $i \leq m-h$ (or $o_i \in [m]\backslash V$), by the assumption that $V \cap S_j \neq \emptyset$ for all $j$, we know that there exists some index $t_j \in S_j \cap V$ for each $j$. By recalling the definition \eqref{eq:preceding-non-erasure}, we know that $\tilde{\cE}_{i} \supseteq [m]\backslash V$, so $t_j \notin \tilde{\cE}_{i}$. Note that $t_j \in V$ while $o_i \notin V$, so $t_j \notin \tilde{\cE}_{i} \cup \{o_i\}$. Hence, we know that $y_{o_i}$ is not erased, or $o_i \notin E_{\pi}$ from the proof of Lemma~\ref{lemma:deterministic}. Since $o_i \notin E_{\pi}$ for $i \leq m-h$ given this particular order $\pi$, we have $|E_{\pi}| \leq h = |V|$, which completes the proof.

\section{Proof of Theorem~\ref{thm:leakage}}
\label{appendix:leakage}
From \eqref{eq:robust-intermediate}, we have 
\begin{align}
I(\pb{x_{\cK}};\pb{\bf{y}}) 
		&  = D(\pb{x_{\cK},\bf{y}}||\qb{x_{\cK},\bf{y}}) - D(\pb{x_{\cK}}||\qb{x_{\cK}})  - D(\pb{\bf{y}}||\qb{\bf{y}}), \label{eq:appendix-robust-eq1}
\end{align}
and it remains to show that the right-hand side is bounded above by $D(\pb{\bf{x}}||\qb{\bf{x}})$.

By applying the chain rule of relative entropy, we have
\begin{equation}
	D(\pb{\bf{x},\bf{y}}||\qb{\bf{x},\bf{y}}) = D(\pb{\bf{x}}||\qb{\bf{x}})+D(\pb{\bf{y}|\bf{x}}||\qb{\bf{y}|\bf{x}}),
\end{equation}
and 
\begin{equation}
	D(\pb{\bf{x},\bf{y}}||\qb{\bf{x},\bf{y}})  = D(\pb{x_{\cK},\bf{y}}||\qb{x_{\cK},\bf{y}}) 
	  + D(\pb{x_{[n]\backslash\cK}|x_{\cK},\bf{y}}||\qb{x_{[n]\backslash\cK}|x_{\cK},\bf{y}}).
\end{equation}
From these equations, we obtain
\begin{multline}
\label{eq:appendix-robust-eq2}
D(\pb{x_{\cK},\bf{y}}||\qb{x_{\cK},\bf{y}}) \\
	=
	D(\pb{\bf{x}}||\qb{\bf{x}}) + D(\pb{\bf{y}|\bf{x}}||\qb{\bf{y}|\bf{x}}) - D(\pb{x_{[n]\backslash\cK}|x_{\cK},\bf{y}}||\qb{x_{[n]\backslash\cK}|x_{\cK},\bf{y}}).
\end{multline}
By substituting \eqref{eq:appendix-robust-eq2} in \eqref{eq:appendix-robust-eq1}, we have 
\begin{align}
I(\pb{x_{\cK}};\pb{\bf{y}}) 
	& = D(\pb{\bf{x}}||\qb{\bf{x}})+D(\pb{\bf{y}|\bf{x}}||\qb{\bf{y}|\bf{x}}) - D(\pb{x_{\cK}}||\qb{x_{\cK}}) \nonumber\\
	& ~~~ - D(\pb{x_{[n]\backslash\cK}|x_{\cK},\bf{y}}||\qb{x_{[n]\backslash\cK}|x_{\cK},\bf{y}})  - D(\pb{\bf{y}}||\qb{\bf{y}}) \nonumber \\
	& \utag{a}{\leq} D(\pb{\bf{x}}||\qb{\bf{x}})+D(\pb{\bf{y}|\bf{x}}||\qb{\bf{y}|\bf{x}}) \nonumber \\
	& \utag{b}{=} D(\pb{\bf{x}}||\qb{\bf{x}}),
\end{align}
where \uref{a} follows from the non-negativity of relative entropy, \uref{b} follows from the assumption $\qb{\bf{y}|\bf{x}}= \pb{\bf{y}|\bf{x}}= w(\bf{y}|\bf{x})$.

\section*{Acknowledgment}
The work of F.~Ye and S.~EL~Rouayheb was supported in part by NSF Grant CCF 1817635. H.~Cho is funded by NIH DP5 OD029574-01 and by Eric and Wendy Schmidt through the Schmidt Fellows program at Broad Institute.




\end{document}